\newtheorem{definition}{Definition}
\newtheorem{strategy}{Strategy}   
\newtheorem{proof}{Proof}    
\newtheorem{theorem}{Theorem}
\begin{document}

\shorttitle{SeqRFM: Fast RFM Analysis in Sequence Data} 
\shortauthors{Y. Zheng et al.}


\title [mode = title]{SeqRFM: Fast RFM Analysis in Sequence Data}   

\author[1]{Yanxin Zheng}
\ead{DuoDuOoozyx@gmail.com}
\address[1]{College of Cyber Security, Jinan University, Guangzhou 510632, China}

\author[1]{Wensheng Gan}
\cortext[cor1]{Corresponding author}
\ead{wsgan001@gmail.com}
\cormark[1]

\author[1]{Zefeng Chen}
\ead{czf1027@gmail.com}

\author[1]{Pinlyu Zhou}
\ead{pinlvzhou@gmail.com}

\author[2]{Philippe Fournier-Viger}
\ead{philfv@szu.edu.cn}
\address[2]{College of Computer Science and Software Engineering, Shenzhen University, Shenzhen 518060, China}

\begin{keywords}
    e-commerce\\
    customer relationship management\\ 
    RFM analysis\\
    sequence data \\
    pattern mining
\end{keywords}

\maketitle

\begin{abstract}
    In recent years, data mining technologies have been well applied to many domains, including e-commerce. In customer relationship management (CRM), the RFM analysis model is one of the most effective approaches to increase the profits of major enterprises. However, with the rapid development of e-commerce, the diversity and abundance of e-commerce data pose a challenge to mining efficiency. Moreover, in actual market transactions, the chronological order of transactions reflects customer behavior and preferences. To address these challenges, we develop an effective algorithm called SeqRFM, which combines sequential pattern mining with RFM models. SeqRFM considers each customer's recency (R), frequency (F), and monetary (M) scores to represent the significance of the customer and identifies sequences with high recency, high frequency, and high monetary value. A series of experiments demonstrate the superiority and effectiveness of the SeqRFM algorithm compared to the most advanced RFM algorithms based on sequential pattern mining. The source code and datasets are available at GitHub \url{https://github.com/DSI-Lab1/SeqRFM}. 
\end{abstract}

\section{Introduction}  \label{sec: introduction}

The global proliferation of Internet technologies has fundamentally reshaped information exchange, business conduct, and social engagement on a large scale. The transformation brings forth a dual-edged dynamic for enterprises: a range of new possibilities alongside intricate challenges. As digitalization becomes essential for business activities, the crucial significance of forging and nurturing enduring customer relationships emerges as a pivotal determinant of enterprise prosperity. This realization has led to the development of customer relationship management (CRM) \cite{ngai2009application}. CRM is a strategy and set of practices aiming at effectively managing and nurturing customer relationships to enhance customer satisfaction and loyalty \cite{kotler1974marketing, xu2002adopting}. In CRM, data mining \cite{chen1996data, gan2017data} is a technique that plays a crucial role in extracting valuable information and knowledge from massive and complex data. Data mining involves analyzing large databases using statistics \cite{friedman1998data}, machine learning \cite{bose2001business}, and pattern recognition techniques \cite{jain2000statistical} to discover hidden patterns, associations, and trends. The RFM model \cite{maryani2017clustering} is commonly used in CRM as a data mining technique. To be more specific, recency (R) measures the time since a customer's last interaction or purchase are important, as they reflect customers' level of interest and loyalty. Frequency (F) assesses how often a customer engages with the business, providing insights into their purchasing habits. The monetary (M) represents the amount of money a customer spends, identifying high-value contributors to revenue. The RFM model segments customers into different groups by analyzing and evaluating the three metrics of R, F, and M, such as high-value customers or potential churn customers. High-value customers are those who recently made frequent and substantial purchases. Conversely, potential churn customers are those who have shown little activity or have made low-value transactions over an extended period. The RFM model helps businesses understand the purchasing habits of customers, loyalty, and value, enabling them to understand their customer base and develop targeted marketing strategies. For instance, personalized offers or loyalty programs can be directed toward high-value customers to foster loyalty and encourage repeat purchases. Re-engagement campaigns can be targeted at potential churn customers to regain their interest and business.

RFM analysis methods can be classified into three types. Methods of the first type divide customers by clustering, such as using the $K$-means algorithm \cite{gustriansyah2020clustering}, which is the most commonly used approach. It aims to determine customer value and loyalty by clustering customers based on their characteristics. The second type, known as classification, involves segmenting customers into distinct personas based on the evaluation of the three dimensions of R, F, and M. This method enables targeted marketing strategies by leveraging the unique characteristics of each customer segment. These RFM customers exhibit high loyalty and make significant contributions to the company's long-term profitability. The third type is frequent pattern mining \cite{hu2013knowledge}, which utilizes the scores of R, F, and M as constraints and sets the corresponding minimum thresholds of the three dimensions to identify valuable commodities sequence. The third type of approach represents a valuable application of RFM analysis to marketing strategy optimization. Furthermore, combining data mining with RFM models can provide more comprehensive and accurate data analysis results \cite{chen2012data, dursun2016using}. Accordingly, commodities with higher RFM scores can be considered to have a superior possibility of purchasing and generating profits for the company, indicating a more valuable and meaningful marketing strategy, for it considers all three aspects comprehensively and all possible sequence combinations.

Pattern mining technology stems from frequent itemset mining (FIM). The primary objective of FIM is to identify itemsets \cite{luna2019frequent} that occur with high frequency within large datasets, thereby unveiling potential correlations and useful rules within the data. However, FIM has a significant limitation as it only considers the items purchased by customers without taking into account the sequential ordering of purchases within the same transaction, while the order of purchases is crucial in understanding customer behavior. For instance, a customer is more likely to purchase tires after buying a car, but it is rare for a customer to buy tires before purchasing a car. To solve this problem, sequential pattern mining (SPM) was proposed \cite{fournier2017survey}, which takes sequential ordering information into account. Moreover, this example indicates that the transactions on some specific commodities are derived from other commodities. In real-world market transactions, which unfold chronologically, FIM overlooks the temporal aspects and monetary information encapsulated within the RFM model \cite{chen2009discovering}. For instance, after purchasing a car, the owner may do several tire replacements. Thus, tire purchases may have higher scores of recency and frequency than car sales, while the revenue generated from tire sales is significantly less than that derived from car sales. If we only consider the frequency dimension, high-profit commodities with a low frequency can be easily neglected. Therefore, it is necessary to consider the impact of different dimensions, especially the monetary dimension, in business. For this reason, incorporating SPM with multiple dimensions, that is, along with RFM analysis, holds great practical significance. Some early works \cite{liu2005integrating, chen2006constraint, chen2009discovering} combined the ideas of SPM and RFM, but they did not account for the weights of customer sequences. Hu \textit{et al.} \cite{hu2013knowledge} proposed the RFM-PostfixSpan algorithm, where the RFM values for each item, itemset, and customer sequence are evaluated during the SPM process. Essentially, RFM-PostfixSpan combines the principles of high-utility sequential pattern mining (HUSPM) \cite{le2018pure,gan2021explainable} and utilizes the concept of upper bounds on utility to reduce the search space. In addition, compactness constraints are considered to prevent excessively long time intervals between customer purchases, which match better with realistic consumption patterns. However, there is still significant room for improvement in terms of the efficiency of this algorithm. In addition, the algorithm has some deficiencies in its definitions.

In RFM-PostfixSpan \cite{hu2013knowledge}, certain important issues were also not taken into consideration. For instance, frequency support was defined as the total number of occurrences of patterns, which led to the problem of the Apriori property (non-monotonicity property) not being satisfied. For example, in Fig. \ref{fig: intro_example}, according to the original definition, the frequency support of the pattern $\langle$\{$a$, $d$\}$\rangle$ is 2, while that of pattern $\langle$\{$a$, $d$\}, \{$f$\}$\rangle$ is 3. That is, the frequency support of a pattern may be less than that of a super-pattern. However, this issue is overlooked, which will lead to some errors when pruning the search space. In addition, the monetary support of a pattern was defined as the total number of monetary units of each occurrence, leading to duplication or omission of item utility calculations in some cases, which was not constant with the current standard tasks of HUSPM. For example, if the monetary value of all occurrences of pattern $\langle$\{$a$, $d$\},\{$f$\}$\rangle$ are considered, the items `$a$' and `$d$' in the first itemset and the item `$f$' in the third itemset are calculated repeatedly. Therefore, it is necessary to revise the problem definition to make this task more rigorous and aligned with current techniques for pattern mining.

\begin{figure}[ht]
    \centering
    \includegraphics[clip,scale=0.12]{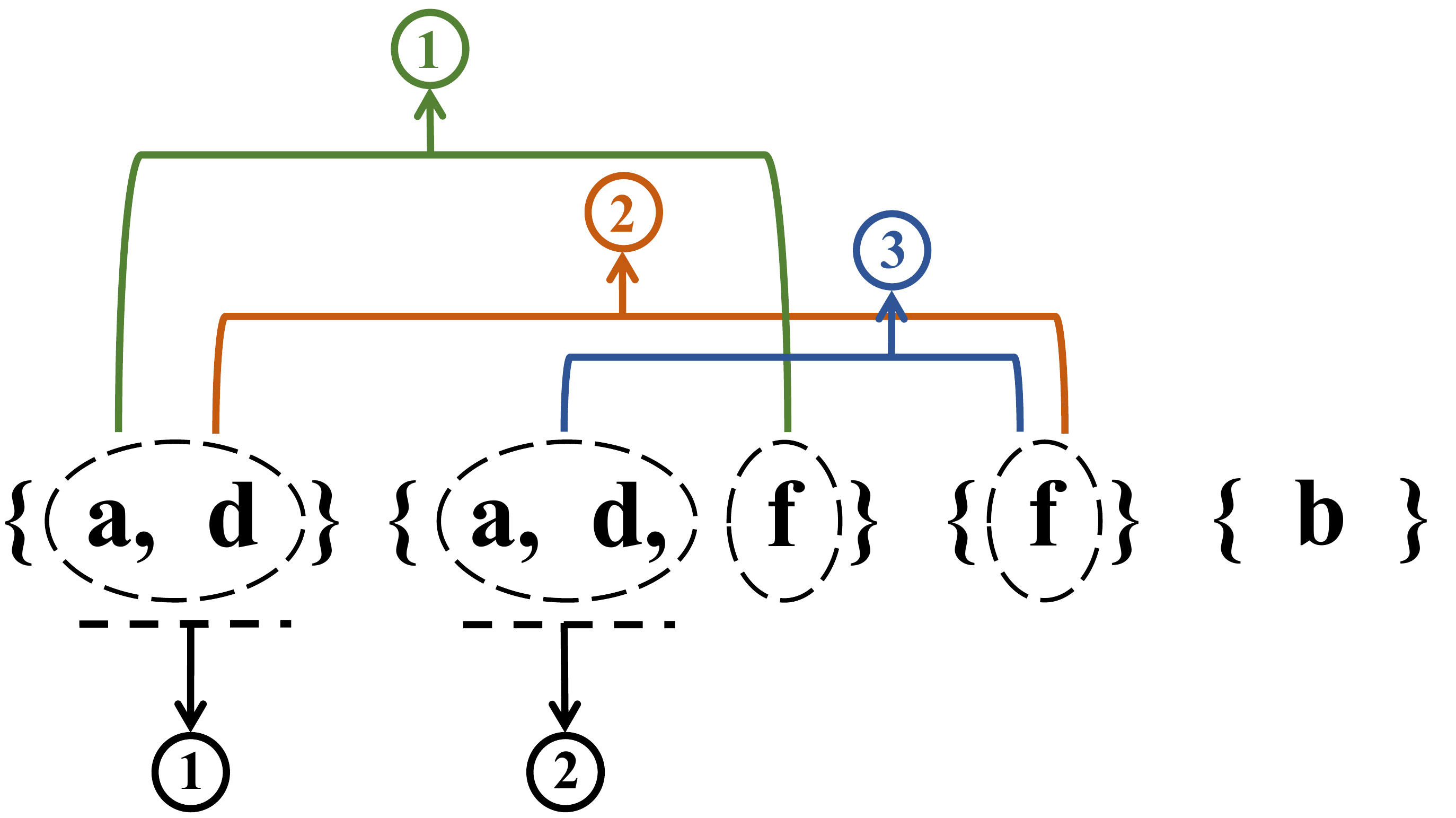}
    \caption{The occurrences of $\langle$\{$a$, $d$\}$\rangle$ and $\langle$\{$a$, $d$\}, \{$f$\}$\rangle$}
    \label{fig: intro_example}
\end{figure}

In response to the aforementioned challenges and to bridge the existing research gap in precision and efficiency of high recency, high frequency, and high monetary value pattern (RFM-pattern) mining in sequence databases, we present SeqRFM, an algorithm inspired by the HUS-Span algorithm \cite{wang2016efficiently}. SeqRFM aims to improve RFM analysis in sequential data. The four key contributions of this paper are as follows:

\begin{itemize}	
    \item  We propose three revised definitions for the three dimensions of RFM-patterns. These definitions are more consistent with existing algorithms for pattern mining.
    
    \item  We develop a novel algorithm named SeqRFM with a novel data structure named RFM-Tree for storing auxiliary information and several pruning strategies for reducing the search space, designed for mining compact RFM-patterns in sequence databases. 

    \item We apply a maximal checking strategy in SeqRFM for mining the maximal RFM-patterns, to obtain a compressed set of mining results.
	
    \item The results of a series of experiments on real and synthetic datasets show that the proposed SeqRFM algorithm provides superior precision and effectiveness compared to state-of-the-art algorithms.
\end{itemize}

Note that this is an extended version of the conference paper \cite{zheng2023fast}. The remaining parts of this paper are organized as follows. Related work is reviewed and summarized in Section \ref{sec: relatedwork}. A series of preliminaries and basic knowledge about RFM-pattern mining are described in Section \ref{sec: preliminaries}. The efficient SeqRFM algorithm and its pruning strategies are detailed in Section \ref{sec: algorithm}. Furthermore, we conducted some experiments, and the experimental results are presented in Section \ref{sec: experiments}. Finally, we conclude and discuss opportunities for future work in Section \ref{sec: conclusion}.

\section{Related Work} \label{sec: relatedwork}

In this section, we review and discuss advances in sequential data analytics, utility mining in sequence data, and the RFM models.

\subsection{Sequential data analytics} 

Sequential pattern mining (SPM) \cite{fournier2017survey,gan2019survey,fournier2022pattern} is an important task in the field of data mining that aims to discover frequently occurring sub-sequences according to the minimum support threshold set by users. Sequential data is usually represented as a series of events in chronological order, such as shopping records, user behavior sequences, and biological sequences. SPM algorithms can be classified into two categories according to the search method: breadth-first search algorithms and depth-first search algorithms. The first SPM algorithm is AprioriAll proposed by Agrawal \textit{et al.} \cite{agrawal1995mining}. This algorithm is based on the famous frequent itemset mining algorithm Apriori \cite{agrawal1994fast}, which efficiently discovers frequent itemsets using a level-wise search strategy, generating candidate itemsets, and pruning those that do not satisfy the minimum support requirement. Subsequently, Agrawal \textit{et al.} \cite{srikant1996mining} improved the AprioriAll algorithm and proposed the GSP algorithm based on a breadth-first search. However, the GSP algorithm requires multiple scans of the database, which can be very time-consuming for large databases. To address this, Zaki \textit{et al.} \cite{zaki2001spade} proposed the SPADE algorithm for depth-first search using the vertical data format, avoiding unnecessary re-scanning and calculation to further improve the efficiency of the algorithm. Nevertheless, such algorithms may generate candidate patterns that do not exist in the database. In response to this issue, pattern growth algorithms have been proposed. For example, Pei \textit{et al.} \cite{pei2004mining} introduced the classical PrefixSpan algorithm, which employs the core idea of generating frequent sequences through prefix growth and reducing the number of candidate patterns using database projection techniques. By applying the prefix growth strategy and data projection technique, the PrefixSpan algorithm can avoid mining frequent sequences in large-scale sequence databases and instead focuses on smaller datasets, significantly reducing the computational complexity. In addition, to satisfy multiple demands, different extensions and constraints based on SPM are proposed, such as negative SPM \cite{dong2018rnsp}, top-$k$ SPM \cite{fournier2013tks}, SPM for mining distinguishing temporal event patterns \cite{duan2017mining}, and SPM with targeted queries \cite{huang2022taspm}.

\subsection{Utility mining in sequence data}

There are many studies on data analysis in sequence data, including frequency-based mining and utility-driven mining \cite{gan2021survey}. It is common in SPM to calculate the support (frequency of occurrences) of sub-sequences to discover frequent sequential patterns. Those are sequences that occur frequently, such as items that are often purchased together in shopping data. However, the support calculation method ignores the actual value and importance of each item, they may not reflect the practical value of each item. While frequent sequential patterns are useful for understanding the co-occurrence relationships of sequences. In HUSPM, each item in each sequence has a corresponding utility value, which indicates the actual value of the pattern. It is common that the higher the utility value, the higher the importance of the pattern. However, the utility of a pattern does not satisfy the Apriori property (anti-monotonic property), that is, the utility of a pattern may be higher than that of its sub-pattern. Therefore, the task of HUSPM has an enormous search space without effective pruning strategies. To cope with this challenge, upper bounds on utility and corresponding pruning strategies were proposed to reduce the search space. To address the above problem, high-utility sequential pattern mining (HUSPM) was introduced by Ahmed \textit{et al.} \cite{ahmed2010novel}, where the sequence-weighted utilization (SWU) is used as an upper bound on the utility to eliminate unpromising sequences and all their super-sequences. Shie \textit{et al.} \cite{shie2011mining} proposed the UMSP algorithm to mine high-utility mobile sequences. However, these algorithms generate a large number of candidate patterns. Wang \textit{et al.} \cite{wang2016efficiently} designed two tight upper bounds to reduce the number of generated candidate patterns. Furthermore, Lin \textit{et al.} \cite{lin2011effective} proposed the Utility Pattern Growth algorithm, which stores information about the utility in a high-efficiency data structure called the Utility Pattern Tree. Gan \textit{et al.} designed a novel data structure called utility-linked-list in the HUSP-ULL algorithm \cite{gan2021fast} and proposed a projection-based utility mining algorithm called ProUM \cite{gan2020proum}.

\subsection{RFM analysis in pattern mining}

RFM analysis consists mainly of three types of methods, but there are fewer studies on pattern mining for RFM analysis than for the other two types. Early work integrated pattern mining with RFM analysis to discover interesting patterns and customer behaviors. Liu \textit{et al.} \cite{liu2005integrating} integrated group decision-making with pattern mining, employing the Analytic Hierarchy Process (AHP), clustering methods, and association rule mining. This work aimed to offer product recommendations tailored to each customer segment. Chen \textit{et al.} \cite{chen2006constraint} introduced the dimensions of compactness and recency as two additional dimensions, besides frequency, into sequential pattern mining (SPM). They proposed a method to discover CFR (compactness, frequency, and recency) patterns. In addition to the dimensions of frequency (F) and recency (R), Chen \textit{et al.} \cite{chen2009discovering} introduced the monetary (M) dimension into SPM, which is the first work for SPM applied in RFM analysis. However, their work did not consider the differences in importance among individual items. To cope with these challenges, the RFM-PostfixSpan algorithm \cite{hu2013knowledge} was proposed to combine an advanced SPM algorithm (PrefixSpan \cite{pei2004mining}) and RFM analysis, and distinguish between the various items. Recently, Wan \textit{et al.} proposed RFM-Miner \cite{wan2022fast1} to discover RFM-pattern in transaction data and RFMUL \cite{wan2022fast} to mine RFM-patterns where the R, F, and M thresholds combined are no less than the user-specified minimum values. Qi \textit{et al.} \cite{qi2022mining} proposed an effective algorithm to mine valuable patterns by applying the fuzzy method to the RFM model. Unfortunately, there have been no recent algorithm optimizations that combine SPM with RFM analysis. With the advancement of HUSPM, new approaches have been developed for computing the monetary (M) dimension, and efficient utility upper bounds and pruning strategies have been devised to reduce unnecessary searching. With the rapid development of e-commerce, the volume of data has been increasing exponentially, necessitating the development of algorithms to address the efficiency challenges in SPM combined with RFM analysis. Building upon the advanced HUSPM algorithm, we are motivated to devise an efficient solution to tackle the computational complexity and scalability issues arising from the integration of RFM analysis in the context of large-scale datasets.

\section{Preliminaries}  \label{sec: preliminaries}

In this section, we provide the formal definitions of high recency, high frequency, and high monetary value compact sequential pattern mining, and we further illustrate their formal problem statement. A monetary and timestamp database (MT-database) is shown in Table \ref{table: database1}.

\begin{table*}[ht]
    \small
	\caption{A monetary and timestamp sequence database}  
	\label{table: database1}
	\centering
	\begin{tabular}{|c|c|}
        \hline
        \textbf{\textit{SID}} & \textbf{MT-sequence} \\ 
        \hline
        $S_1$ &  $\langle $\{($a$, 2) ($d$, 10)\}:100, \{($a$, 19) ($d$, 20) ($f$, 13)\}:74, \{($f$, 40)\}:45, \{($b$, 15)\}:12$\rangle$\\ \hline
        
        $S_2$ &  $\langle $\{($a$, 15) ($d$, 24)\}:100, \{($g$, 50)\}:95, \{($b$, 17)\}:81$ \rangle$\\ \hline
        
        $S_3$ &  $\langle $\{($a$, 11)\}:96, \{($c$, 21)\}:62, \{($a$, 25)\}:58, \{($c$, 19)\}:43$ \rangle$\\ \hline
        
        $S_4$ &  $\langle $\{($b$, 8) ($c$, 15)\}:70, \{($b$, 25) ($c$, 12)\}:62, \{($f$, 15)\}:58$\rangle$\\ \hline

        $S_5$ &  $\langle $\{($a$, 10)\}:98, \{($a$, 21) ($f$, 30)\}:92$\rangle$\\ \hline

        $S_6$ &  $\langle $\{($c$, 15)\}:100, \{($a$, 20) ($d$, 22)\}:92, \{($a$, 20) ($d$, 21)\}:80, \{($a$, 30) ($b$, 10)\}:71$\rangle$\\ \hline      
	\end{tabular}
\end{table*}

\begin{definition}[Item, itemset and sequence]  
  \rm In a database consisting of multiple sequences, a sequence is defined by numerous itemsets, and each itemset contains one or more items. An itemset can be represented as $I$ = \{$i_1$, $i_2$, \dots, $i_n$\}, and the number $n$ of items in the itemset can be represented as $|I|$. That is, $|I|$ = $n$, and the itemset is called an $n$-itemset as well. A sequence $Q$ containing numerous itemsets is denoted as $Q$ = $\langle$$I_1$, $I_2$, \dots, $I_l$$\rangle$. The size of a sequence is $l$ and is denoted as $|Q|$. Moreover, $|Q|$ = $\sum_{k = 1}^{l}$ $|I_k|$, and the sequence $Q$ is called a $l$-sequence.
\end{definition}

For example, consider a sequence $Q$ = $\langle$$I_1$, $I_2$$\rangle$ containing two itemsets, including $I_1$ = \{$a$\} and $I_2$ = \{$a$, $f$\}. The sizes $|I_1|$ and $|I_2|$ of itemsets are 1 and 2. Furthermore, $I_1$ and $I_2$ are 1-itemset and 2-itemset, respectively. Therefore, $Q$ is called a 3-sequence, since $|S_1|$ = 1 + 2 = 3.

\begin{definition}[Monetary and timestamp database]
    \rm  In RFM-pattern mining, an MT-database consists of numerous MT-sequences containing transaction information for customers. Each sequence representing all transactions of a customer in the database is assigned a unique identifier \textit{SID}, which is the index of the sequence in the database. Additionally, each itemset is associated with a timestamp, indicating the time when the set of items was purchased. An MT-item is represented as a tuple in the form of ($i$, $m$), where $i$ $\in$ $I$ and $m$ is the monetary value of the item $i$. A MT-itemset with a size of $n$ is represented by $X$ = $\langle$\{($i_1$,$m_1$) ($i_2$,$m_2$) $\dots$ ($i_n$, $m_n$)\}:$T$, where $T$ indicates the timestamp associated with the itemset. An item can also be represented by a triple of the form  ($i_n$, $m_n$, $T$), called an MT-item, combining a timestamp with the item and a monetary value. In addition, a MT-database can be denoted as $\mathcal{D}$ = $\langle$$Q_1$, $Q_2$, $\dots$, $Q_{\zeta}$$\rangle$, where $\zeta$ is the size of $\mathcal{D}$. 
\end{definition}

For example, as shown in Table \ref{table: database1}, the itemset $\langle$($a$, 15) ($d$, 24)\}:100$\rangle$ is the first itemset in $S_2$. The monetary values of $a$ and $d$ are 15 and 24, and the timestamps of $a$ and $d$ are both 100.

\begin{definition}[sub-sequence and remaining sequence \cite{wang2016efficiently}]
    \rm Given an itemset $X'$ = $\langle$$I'_1$, $I'_2$, $I'_3$, $\dots$, $I'_\xi$ $\rangle$ and an itemset $X$ = $\langle$$I_1$, $I_2$, $I_3$, $\dots$, $I_n$ $\rangle$, where $\xi$ $\leq$ $n$, for 1 $\le$ $\rho_1$ $<$ $\rho_2$ $<$, $\dots$, $\rho_{\xi}$ $\le$ $\xi$, if $I'_1$ = $I_{\rho_1}$, $I'_2$ = $I_{\rho_2}$, $I'_3$ = $I_{\rho_3}$, $\dots$, $I'_{\xi}$ = $I_{\rho_{\xi}}$, $X'$ is defined as a sub-itemset of $X$, denoted as $X'$ $\sqsubseteq$ $X$. Furthermore, given a sequence $Q'$ = $\langle$$X_1$, $X_2$, $X_3$, $\ldots$, $X_\upsilon$ $\rangle$ and a sequence $Q$ = $\langle$$Y_1$, $Y_2$, $Y_3$, $\dots$, $Y_{\Phi}$$\rangle$, where $\upsilon$ $\leq$ $\Phi$, for 1 $\le$ $\rho_1$ $<$ $\rho_2$ $<$, $\dots$, $\rho_{\upsilon}$ $\le$ $\upsilon$, if $X_1$ $\sqsubseteq$ $Y_{\rho_1}$, $X_2$ $\sqsubseteq$ $Y_{\rho_2}$, $\dots$, $X_{\upsilon}$ $\sqsubseteq$ $Y_{\rho_{\upsilon}}$, sequence $Q'$ is considered as a sub-sequence of sequence $Q$, denoted as $Q'$ $\sqsubseteq$ $Q$. $Q$ is a super-sequence of $Q'$, and $Q'$ also can be represented as $Q$($Q'$, $\rho_{\upsilon}$, $Q$), where $\rho_{\upsilon}$ is the ending position of the last itemset in $Q$. Furthermore, the rest of sequence $Q$ from position $\rho_{\upsilon}$ is called the remaining sequence and denoted as $Q/_{(Q', \rho_{\upsilon})}$.
\end{definition}

    \rm Take the database shown in Table \ref{table: database1} as example. Given sequence $Q'$ = $\langle$\{($a$, 2)\}:100, \{($a$, 19)\}:74, \{($f$, 40)\}:45$\rangle$, $Q'$ is a sub-sequence of $S_1$. Its remaining sequence is $S_1/_{(Q', 3)}$ = $\langle$\{($b$, 15)\}:12$\rangle$.

\begin{definition}[Matching]
    \rm Given an itemset $X$ = \{$\eta_1$, $\eta_2$, $\eta_3$, $\dots$, $\eta_n$\} and a MT-itemset $Y$ = \{($i_1$, $m_1$, $t_1$), ($i_2$, $m_2$, $t_2$), ($i_3$, $m_3$, $t_3$), $\dots$, ($i_n$, $m_n$, $t_n$)\}, for each 1 $\le$ $\kappa$ $\le$ $n$, if and only if $\eta_{\kappa}$ = $i_{\kappa}$, $X$ is defined as a match of $Y$, represented by $X$ $\sim$ $Y$. Supposed that there is a sequence $Q'$ = $\langle$$I'_1$, $I'_2$, $I'_3$, $\dots$, $I'_{\xi}$$\rangle$ and a MT-sequence $Q$ = $\langle$$I_1$, $I_2$, $I_3$, $\dots$, $I_{\xi}$$\rangle$, for each 1 $\le$ $\kappa$ $\le$ $\xi$, if and only if $I'_{\kappa}$ $\sim$ $I_{\kappa}$, $Q'$ is defined as a match of $Q$. Similarly, it is denoted as $Q'$ $\sim$ $Q$.
\end{definition}

As shown in Table \ref{table: database1} , given an itemset $X$ = \{$a$, $d$\}  and a MT-itemset $Y$ = \{($a$, 2), ($d$, 10)\}:100, $X$ is defined as a match of $Y$. Similarly, supposing that there is a sequence $Q'$ = $\langle$\{a, d\},  \{a, d, f\}, \{f\}, \{b\}$\rangle$ and a MT-sequence $Q$ = $\langle$\{($a$, 2), ($d$, 10)\}:100, \{($a$, 19), ($d$, 20), ($f$, 13)\}:74, \{($f$,40)\}:45, \{($b$,15)\}:12$\rangle$, $Q'$ is denoted as a match of $Q$.

\begin{definition}[Instance \cite{wang2016efficiently}]
    \rm Given a sequence $Q_1$ = $\langle$$X_1$, $X_2$, $X_3$, $\dots$, $X_\xi$ $\rangle$ and a MT-sequence $Q_2$ = $\langle$$Y_1$, $Y_2$, $Y_3$, $\dots$, $Y_n$$\rangle$, where $\xi$ $\leq$ $n$, for 1 $\le$ $\rho_1$ $<$ $\rho_2$ $<$ $\dots$ $\rho_{\xi}$ $\le$ $n$, if in the sub-sequence $Q'_2$ = $\langle$$Y_{\rho_1}$, $Y_{\rho_2}$, $Y_{\rho_3}$, $\dots$, $Y_{\rho_{\xi}}$$\rangle$, for $\forall$ $Y_k$ $\in$ $Q'_2$ $\wedge$ $Y'$ $\sqsubseteq$ $Y_k$, there always holds $X_k$ $\sim$ $Y'$, then $Q'_2$ is defined as an instance of $Q_1$ in $Q_2$.
\end{definition}

For example, for the database shown in Table \ref{table: database1}, a sequence $Q_1$ = $\langle$\{$a$\}, \{$a$\}, \{$f$\}$\rangle$ and a sequence $Q'_2$ = $\langle$\{($a$, 2)\}:100, \{($a$, 19)\}:74, \{($f$, 40)\}:45$\rangle$, $Q'_2$ is a sub-sequence of $S_1$ and an instance of $Q_1$.

\begin{definition}[Monetary value]
  \rm Given a sequence $Q$ with a unique \textit{SID} = $S$, the total monetary value of $Q$ is defined as $M$($Q$, $S$) = $\sum_{k=1}^{n}$ $m_k$, where $n$ is the size of $Q$ and $m_k$ is the monetary value of the item $i_k$. Assuming that $Q$ contains $m$ instances of a sequence $Q'$, the monetary value of $Q'$ is expressed by $M$($Q'$, $S$) = max\{$M$($Q'_1$, $p_1$, $S$), $M$($Q'_2$, $p_2$, $S$), $\dots$, $M$($Q'_m$, $p_m$, $S$)\}, where $p_m$ is the ending position of the last itemset of the $m$-th instance $Q'$. The total monetary value of $Q'$ in database $\mathcal{D}$ can be represented by $M$($Q'$) = $\sum_{Q_i \in \mathcal{D} \wedge Q' \sqsubseteq Q_i}$ $M$($Q'$,  $S_i$), where $S_i$ is the \textit{SID} of sequence $Q_i$. Assuming that there is a database $\mathcal{D}$ with a size of $n$, the total monetary value of $\mathcal{D}$ is calculated as $M$($\mathcal{D}$) = $\sum_{Q_i \in \mathcal{D}}$ $M$($Q_i$, $S_i$). Given the relative minimum monetary threshold $\gamma$ and the total monetary value $M$($\mathcal{D}$) of the database, if $M$($Q'$) $\ge$ $\gamma$ $\times$ $M$($\mathcal{D}$), $Q'$ is called a M-pattern. Through this redefinition, the monetary value of each pattern avoids redundant recalculation by taking the maximum value, which is more in line with current HUSPM tasks.
\end{definition}
    
For example, as shown in Table \ref{table: database1}, the total monetary value of the first itemset $\langle$\{(a, 2) (d, 10)\}:100$\rangle$ is equal to 2 + 10 = 12. Moreover, the total monetary value of $S_1$ is the sum of all itemsets, which is (2 + 10) + (19 + 20 + 13) + (40 + 15) = 119. The calculations of other sequences are similar to $S_1$. Therefore, the total monetary value of the database $\mathcal{D}$ can be calculated as $\sum_{Q \sqsubseteq \mathcal{D}}$ = 119 + 106 + 76 + 75 + 61 + 138 = 575. The item $\langle$\{$a$\}$\rangle$ occurs three times in sequence $S_6$, with corresponding timestamps of 92, 80, and 71 and respective item monetary values of 20, 20, and 30. Thus, $M$($\langle$\{$a$\}$\rangle$, $S_6$) = \textit{max}\{20, 20, 30\} = 30. Moreover, $\langle$\{$a$\}$\rangle$ also occurs in $S_1$, $S_2$, $S_3$ and $S_5$. $M$($\langle$\{$a$\}$\rangle$) = $M$($\langle$\{$a$\}$\rangle$, $S_1$) + $M$($\langle$\{$a$\}$\rangle$, $S_2$) + $M$($\langle$\{$a$\}$\rangle$, $S_3$) + $M$($\langle$\{$a$\}$\rangle$, $S_5$) + $M$($\langle$\{$a$\}$\rangle$, $S_6$) = 19 + 15 + 25 + 10 + 30 = 99. Supposed that the relative minimum monetary threshold $\gamma$ is 0.15, $M$($\langle$\{$a$\}$\rangle$) = 99 $\ge$ (0.15 $\times 575$), then $\langle$$a$$\rangle$ is a M-pattern. 

\begin{definition}[Recency value]
  \rm  For a sequence $Q$ that includes several instances of a sub-sequence $Q'$, the current timestamp of $Q'$ is denoted as \textit{CT}($Q'$, $Q$), which is the first itemset in $Q$.
  The recent timestamp of an instance $Q'$ is represented as \textit{RT}($Q'$, $Q$), which is the timestamp of the first itemset of $Q'$ in $Q$. Thus, for any instance $Q'$ in $Q$, its current timestamp is the same.
  Given a decay speed $\delta$ between 0 and 1, the recency value of an instance $Q'$ can be represented as $R$($Q'$, $p_i$, $Q$) = (1 - $\delta$)$^{CT - RT_i}$, where $p_i$ is the ending position of the last itemset and $RT_i$ is the timestamp of the $i$-th instance $Q'$. 
  For multiple instances of $Q'$ in $Q$, the recency value of $Q'$ in a sequence $Q$ is denoted as $R$($Q'$, $Q$) = \textit{max}\{$R$($Q'$, $p_1$, $Q$), $R$($Q'$, $p_2$, $Q$), $\dots$, $R$($Q'$, $p_m$, $Q$)\}, where $m$ is the total number of instances in $Q$. 
  Moreover, the recency value of $Q'$ in $\mathcal{D}$ is denoted as $R$($Q'$) = $\sum_{Q \in \mathcal{D} \wedge Q' \sqsubseteq Q}$ $R$($Q'$, $Q$). Furthermore, given an absolute minimum recency value threshold $\alpha$, if $R$($Q'$) $\ge$ $\alpha$, then the sequence $Q'$ is called a R-pattern. To maintain consistency with the definition of the monetary dimension, we also redefine the recency dimension.
\end{definition}

For example, consider $\langle$\{$c$\}, \{$a$\}$\rangle$ in Table \ref{table: database1}. It occurs in $S_3$ and $S_6$. Suppose that the decay speed $\delta$ is 0.01 and the absolute minimum recency threshold $\alpha$ is 1.2. In $S_3$, the current timestamp and recent timestamp are 96 and 62, so $R$($\langle$\{$c$\}, \{$a$\}$\rangle$, $S_3$) = \textit{max}\{(1 - 0.01)$^{96 - 62}$\} $\approx$ 0.71. In $S_6$, $\langle$\{$c$\}, \{$a$\}$\rangle$ occurs three times with their current timestamp being 100, and $R$($\langle$\{$c$\},\{$a$\}$\rangle$, $S_6$) = \textit{max}\{(1 - 0.01)$^{100 - 100}$, (1 - 0.01)$^{100 - 100}$, (1 - 0.01)$^{100 - 100}$\} = 0.99. Thus, $R$($\langle$\{$c$\}, \{$a$\}$\rangle$) = 0.71 + 0.99 = 1.7 $\ge$ 1.2. It is a R-pattern.

\begin{definition}[Frequency value]
  \rm  Given a sequence $Q'$, if an instance of $Q'$ occurs in $Q$, the frequency value of $Q'$ can be represented as $F$($Q'$, $Q$) = 1. Moreover, the total frequency of $Q'$ in $\mathcal{D}$ is denoted as $F$($Q'$) = $\sum_{Q \in \mathcal{D} \wedge Q' \sqsubseteq Q}$ $F$($Q'$, $Q$). Suppose that the relative minimum frequency threshold is $\beta$, and the number of sequences in the database is denoted as $|\mathcal{D}|$, if $F$($Q'$) $\ge$ $\beta$ $\times$ $|\mathcal{D}|$, the sequence $Q'$ is considered a F-pattern. The redefined definition conforms to the sequential data mining task and satisfies the downward closure property.
\end{definition}

For example, let us assume that the relative minimum frequency threshold $\beta$ is 0.5. In Table \ref{table: database1},  the number of sequences is 6, $|\mathcal{D}|$ equals 6, thus the minimum frequency threshold is $\beta$ $\times$ $|\mathcal{D}|$ = 3. As for $\langle$\{$a$\}$\rangle$, it occurs in $S_1$, $S_2$, $S_3$, $S_5$ and $S_6$, $F$($\langle$\{$a$\}$\rangle$) = 5 $\ge$ 3. However, $\langle$\{$g$\}$\rangle$ only occurs in $S_2$, and hence $F$($\langle$\{$g$\}$\rangle$) = 1 $<$ 3. Thus, $\langle$\{$a$\}$\rangle$ is a F-pattern, but $\langle$\{$g$\}$\rangle$ is not a F-pattern.
    
\begin{definition}[Extension and extension item \cite{chen2023uucpm}]
  \rm  Given a $l$-sequence $\kappa$ and an item $i$, if a ($l+1$)-sequence $\epsilon$ is generated by appending an item ${i}$ to the last itemset of $\kappa$, this process is called an I-Extension. We denote it by the formula $\epsilon$ = $\kappa$ $\bigoplus$ $i$. On the other hand, if $\epsilon$ is generated by the process of considering an item $i$ as a new itemset and appending it after the last itemset of $\kappa$, this process is called an S-Extension. Similarly, we denote it by the formula $\epsilon$ = $\kappa$ $\bigotimes$ $i$. In addition,  for the extended pattern $\epsilon$, $\kappa$ is called its prefix pattern.
\end{definition}

For example, as $S_1$ shown in Table \ref{table: database1} and a 2-sequence $\langle$\{($a$, 19) ($d$, 20)\}: 74$\rangle$, the 3-sequence $\langle$\{($a$, 19) ($d$, 20) ($f$, 13)\}: 74$\rangle$ is formed by I-extension, and the 3-sequence $\langle$\{($a$, 19) ($d$, 20)\}: 74, \{($f$, 40)\}: 45$\rangle$ is obtained by an S-extension.

\begin{definition}[Compactness constraint \cite{hu2013knowledge}]
  \rm   Given a maximum time-span length $\theta$, assuming that there is a sequence $Q$ = $\langle$$I_1$, $I_2$, $\dots$, $I_n$$\rangle$, the time-span length of $Q$ can be represented as \textit{TSP}($Q$) = \textit{T}($I_1$) - \textit{T}($I_n$), where $T$($I_1$) and $T$($I_n$) represent the timestamps of the 1-st itemset and the $n$-th itemset. Moreover, if \textit{TSP}($Q$) $\le$ $\theta$, $Q$ is named a compact sequence. 
\end{definition}

For example, the timestamp of $S_6$ in Table \ref{table: database1} is 100. The time-span length of $\langle$\{($c$, 15)\}:100, \{($a$, 20) ($d$, 22)\}:92$\rangle$ is 100 - 92 = 8. The time-span length of $\langle$\{($c$, 15)\}:100, \{($a$, 20) ($d$, 22)\}:92, \{($a$, 20) ($d$, 21)\}:80$\rangle$ is 100 - 80 = 20. Given a maximum span length $\theta$ = 15, there is 100 - 92 = 8 $<$ 15 and 100 - 80 = 20 $\geq$ 15. Thus, $\langle$\{($c$, 15)\}:100, \{($a$, 20) ($d$, 22)\}:92$\rangle$ is a compact sequence, while $\langle$\{($c$, 15)\}:100, \{($a$, 20) ($d$, 22)\}:92, \{($a$, 20) ($d$, 21)\}:80$\rangle$ is not a compact sequence.

\textbf{Problem statement:} Let there be a monetary and timestamp sequence database $\mathcal{D}$, a decay rate $\delta$, a absolute minimum recency threshold $\alpha$, a relative minimum frequency threshold $\beta$, a relative minimum monetary threshold $\gamma$ and a maximum time-span length $\theta$. If a sequential pattern $P$ in $\mathcal{D}$ satisfies four conditions including $R$($P$) $\ge$ $\alpha$, $F$($P$) $\ge$ $\beta$ $\times$ $|\mathcal{D}|$, $M$($P$) $\ge$ $\gamma$ $\times$ $M$($\mathcal{D}$), and \textit{TSP}($P$) $\le$ $\theta$ , then $P$ is said to be a compact RFM-pattern. The task of the algorithm in this paper is to search all compact RFM-patterns (\textit{RFMs}) in the given database $\mathcal{D}$.

To illustrate the task of mining \textit{RFMs}, we present an example using the MT-database shown in Table \ref{table: database1}, where $M$($\mathcal{D})$ is 575. Suppose that $\delta$ = 0.1, $\alpha$ = 1.44, $\beta$ = 0.2, $\gamma$ = 0.25, and $\theta$ = 60, thus, the absolute minimum recency is 1.44, the absolute minimum frequency is 2, and the absolute minimum monetary value is 143.75. Moreover, all the mined \textit{RFMs} are $\langle$\{$a$, $d$\}, \{$g$\}$\rangle$ and $\langle$\{$a$\}, \{$g$\}$\rangle$ with recency values of 2.00 and 3.00, frequency values of 2 and 3, and monetary values of 168 and 164, respectively.

It is worth noting that it is hard to take the above four parameters into account at the same time in the experiments \cite{hu2014discovering}. Due to the importance of frequency and utility in pattern mining, the focus is on the frequency and monetary dimensions in the following. Moreover, the time-span length is considered an important parameter because of the aim of mining compact patterns.

\section{Proposed SeqRFM algorithm}
\label{sec: algorithm}

\subsection{Data structure} \label{sec:datastructure}

Inspired by HUS-Span \cite{wang2016efficiently}, we design a tree structure called RFM-Tree to mine the compact \textit{RFMs}. The RFM-Tree starts with an empty root node ``$\langle \rangle$", and each node $t$ contains two parts, including \textit{t.seq} and \textit{t.mtchain}, where \textit{t.seq} is the sequence and \textit{t.mtchain} is the MT-chain, which stores some auxiliary information to reduce the search space and efficiently calculate the related information of the recency and monetary dimensions.

\begin{definition}[The auxiliary monetary information \cite{wang2016efficiently}]
  \rm  For an I-extension or S-extension that results in an extended pattern $\epsilon$ in a $l$-sequence $Q$, the maximum monetary value of the prefix pattern $\kappa$ is denoted and defined as  \textit{AUXM($\kappa$, $Q$)} = \textit{max}\{$M$($\kappa$ $j_i$, $Q$)$|$$\forall 1 \leq j_1 \leq j_2 \leq \dots \leq j_n \leq n\}$, where $n$ is the number of instances of $\kappa$. 
  The auxiliary monetary information (\textit{AUXM}) helps in computing the monetary value of the prefix pattern $\kappa$ by storing the auxiliary information of the maximum monetary value.
\end{definition}

For example, for sequence $Q$ = $S_6$ shown in Table \ref{table: database1}, there are two instances of $\langle$\{$ad$\}$\rangle$ in $S_6$, including $Q$($\langle$\{$ad$\}$\rangle$, 2, $S_6$) and $Q$($\langle$\{$ad$\}$\rangle$, 3, $S_6$). The \textit{AUXM($\langle$\{$ad$\}$\rangle$, $p$, $S_6$)} = \textit{max}\{20 + 22, 20 + 21\} = \textit{max}\{42, 41\} = 42.

\begin{definition}[The monetary value of a remaining sequence \cite{wang2016efficiently}]
  \rm For an I-extension or S-extension and a prefix pattern $\kappa$ in $Q$, let \textit{RM($\kappa$, $p_i$, $Q$)} = $\sum_{i = p_i + 1}^{l}$ $m_i$, where $p_i$ is the ending position of the last item in $Q$. The residual monetary value (\textit{RM}) is utilized to document the total monetary value of the remaining sequence.
\end{definition}

For example, for $Q$ = $S_6$ shown in Table \ref{table: database1}, there are three instances of $\langle$\{$c$\}\{$a$\}$\rangle$ in $S_6$, encompassing $Q$($\langle$\{$c$\}\{$a$\}$\rangle$, 2, $S_6$), $Q$($\langle$\{$c$\}\{$a$\}$\rangle$, 4, $S_6$), and $Q$($\langle$\{$c$\}\{$a$\}$\rangle$, 6, $S_6$). Their monetary values for the remaining sequence are: \textit{RM($\langle$\{$c$\}\{$a$\}$\rangle$, 2, $S_6$)} = 22 + 20 + 21 + 30 + 10 = 103, \textit{RM($\langle$\{$c$\}\{$a$\}$\rangle$, 4, $S_6$)} = 21 + 30 + 10 = 61, and \textit{RM($\langle$\{$c$\}\{$a$\}$\rangle$, 6, $S_6$)} = 10, respectively.

In previous work, the RFM-PostfixSpan algorithm \cite{hu2013knowledge} performed redundant operations to compute the recency value, frequency value, and monetary value of a pattern that occurs multiple times in its super-sequence. Aiming at improving search efficiency, we design the MT-chain structure to record the needed information, which has five fields: {\textit{SID}: The \textit{SID} of the sequence in the database. \textit{TID}: The extension position in the sequence. \textit{TS}: The recent timestamp of the prefix pattern $\kappa$. \textit{AUXM}: The maximum monetary value of the prefix pattern at the ending position. \textit{RM}: The total monetary value of the remaining sequence at the ending position.

\begin{figure}[ht]
    \centering
    \includegraphics[clip,scale=0.11]{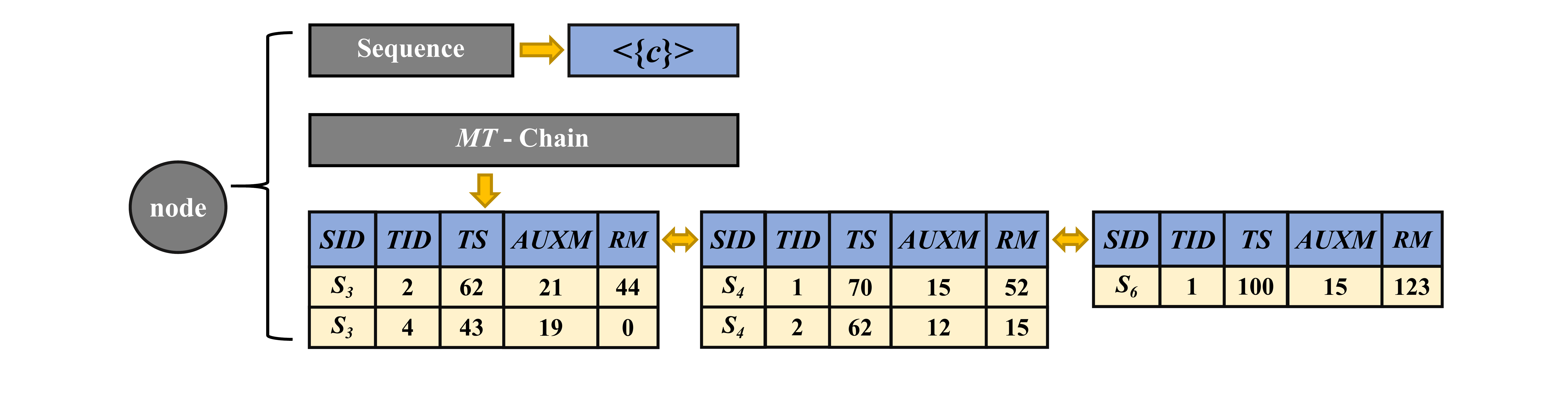}
    \caption{MT-chain for sequence $\langle$\{$c$\}$\rangle$.}
    \label{fig: MT-chain of c}
\end{figure}

According to the MT-chain, the values of three dimensions can be quickly calculated. Assuming that there are multiple instances of \textit{t.seq} in the sequence $Q$, all the elements that should be added to the MT-chain are calculated and added. As for recency dimension, the sequence $Q$ can be acquired from the MT-chain according to the \textit{SID} field, denoted as $Q_s$, and the recent timestamp can be acquired from the \textit{TS} field in the MT-chain element, denoted as \textit{RT}(\textit{t.seq}, $Q_s$). The current timestamp of the instance $Q'$ in sequence $Q$ is represented as \textit{CT}($Q'$, $Q$). Considered each \textit{t.seq}, $R$(\textit{t.seq}) = $\sum_{Q_s \in t.mtchain}$ \textit{max}\{(1-$\delta$)$^{CT-RT_1}$, (1-$\delta$)$^{CT-RT_2}$, $\dots$, (1-$\delta$)$^{CT-RT_m}$\}, where $m$ is the number of instances in sequence $Q_s$. As for monetary dimension, $M$(\textit{t.seq}) = $\sum_{Q_s \in t.mtchain}$ \textit{max}\{\textit{AUXM$_1$}, \textit{AUXM$_2$}, \textit{AUXM$_3$}, $\dots$, \textit{AUXM$_m$}\}. As for the frequency dimension, if a pattern exists the monetary value in the sequence, this indicates the pattern occurs in the sequence. Thus, the number of elements is the $F$(\textit{t.seq}) in the database.

Taking the 1-sequence $\langle$\{$c$\}$\rangle$ as an example, the process of generating MT-chain is shown in Fig. \ref{fig: MT-chain of c}. It can be seen that $\langle$\{$c$\}$\rangle$ occurs twice in $S_3$, twice in $S_4$, and once in $S_6$. In $S_3$, $\langle$\{$c$\}$\rangle$ primarily occurs at the second itemset, so that its \textit{TID} is 2, and timestamp (\textit{TS}) is 62. Besides, \textit{AUXM}($\langle$\{$c$\}$\rangle$, $S_3$) is 21, and \textit{RM} could be calculated by the formula \textit{RM} = 25 + 19 = 44. Then, for its second occurrence at the fourth itemset, its \textit{TID} and \textit{TS} are 4 and 43, and \textit{AUXM} is 19. Due to its final position, the \textit{RM} is recorded as 0. The above recording process is also applicable to $\langle$\{$c$\}$\rangle$ in $S_4$ and $S_6$. Assuming that $\delta$ is 0.01, the method of calculating the three dimensions is as follows. The current timestamp of $S_3$ \textit{CT}($\langle$\{$c$\}$\rangle$, $S_3$) is 96. The recent timestamp of the first instance acquired from the MT-chain \textit{RT}($\langle$\{($c$, 21)\}:62$\rangle$, $S_3$) is 62, and the recency can be calculated as (1 - 0.01)$^{96-62}$ $\approx$ 0.71. The recent timestamp of the second instance acquired from the MT-chain \textit{RT}($\langle$\{($c$, 19)\}:43$\rangle$, $S_3$) is 43, and the recency can be calculated as (1 - 0.01)$^{96-43}$ $\approx$ 0.59. Thus, the recency of $\langle$\{$c$\}$\rangle$ in $S_3$ is 0.71. As for $\langle$\{$c$\}$\rangle$ in $S_4$ and $S_6$, the recency is both 1, which can be acquired by similar calculations. Hence $R$($\langle$\{$c$\}$\rangle$) = 2.71. The monetary value of $\langle$\{$c$\}$\rangle$ is $M$($\langle$\{$c$\}$\rangle$), and $M$($\langle$\{$c$\}$\rangle$) = $M$($\langle$\{$c$\}$\rangle$, $S_3$) + $M$($\langle$\{$c$\}$\rangle$, $S_4$) + $M$($\langle$\{$c$\}$\rangle$, $S_6$) = \textit{max}\{21, 19\} + \textit{max}\{15, 12\} + 15 = 51. Moreover, there are three elements in the MT-chain, thus, $F$($\langle$\{$c$\}$\rangle$) is 3.

\begin{figure}[ht]
    \centering
    \includegraphics[clip,scale=0.11]{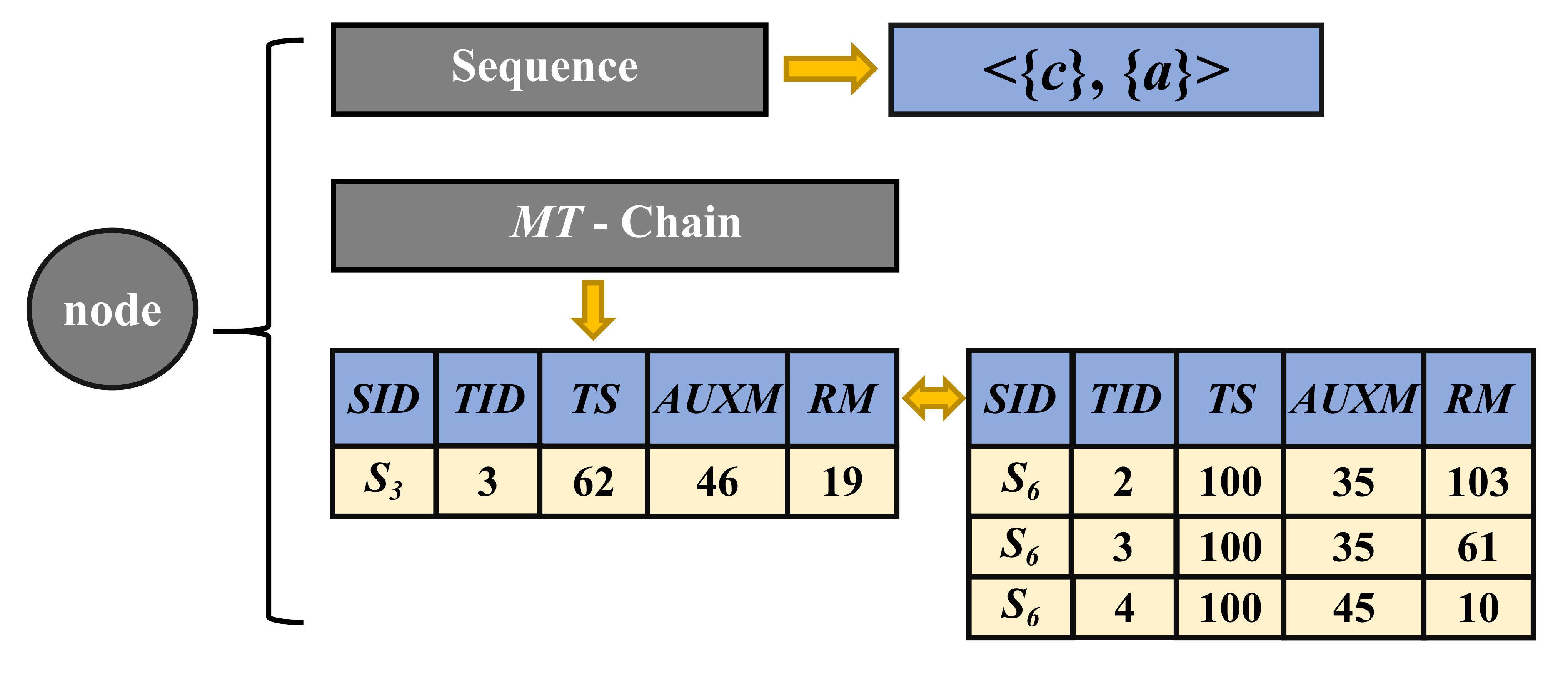}
    \caption{MT-chain for sequence $\langle$\{$c$\}, \{$a$\}$\rangle$.}
    \label{fig: MT-chain of ca}
\end{figure}

To further describe the process of generating an extension pattern through the MT-chain, based on the above example, the process of developing pattern $\langle$\{$c$\}, \{$a$\}$\rangle$ is depicted in Fig. \ref{fig: MT-chain of ca}. In sequence $S_3$, an instance of $\langle$\{$a$\}$\rangle$ occurs in the third itemset as an S-extension, so \textit{TID} is 3. \textit{AUXM}($\langle$\{$c$\}, \{$a$\}$\rangle$, $S_3$) is 21 + 25 = 46. The \textit{RM}($\langle$\{$c$\}, \{$a$\}$\rangle$, $S_3$) is 19. In $S_6$, there are three S-extension positions of 2, 3, and 4, and their corresponding \textit{TID} are 2, 3, and 4. The \textit{AUXM} of instances are 15 + 20 = 35, 15 + 20 = 35, and 15 + 30 = 45, respectively. Moreover, \textit{RM}($\langle$\{$c$\}, \{$a$\}$\rangle$, $S_6$) are 22 + 20 + 21 + 30 + 10 = 103, 21 + 30 + 10 = 61, and 10. Assuming that $\delta$ is 0.01, through a similar process of calculation, the recency value, monetary value, and frequency value can be calculated as 1.71, 91, and 2.

\subsection{Pruning strategies}  \label{sec:strategies}

\begin{definition}[Sequence-weighted monetary \cite{wang2016efficiently}]
    \rm   Given a 1-sequence $Q'$, let $\mathcal{D}$ be a set of sequences $\langle$\{$Q_1$, $Q_2$, $\dots$, $Q_{\zeta}$\}$\rangle$, where $\zeta$ is the size of $\mathcal{D}$. The sequence-weighted monetary value (\textit{SWM}) of $Q'$ is equal to $\sum_{Q_i\sqsubseteq \mathcal{D} \wedge Q'\sqsubseteq Q_i}$ \textit{M}($Q_i$). 
\end{definition}

For example, assuming that $\lambda$ is set to 0.25 and \textit{MMinsup} is 143.75. According to the database in Table \ref{table: database1}, $\langle$\{$d$\}$\rangle$ occurs in $S_1$, $S_2$, and $S_6$, \textit{SWM}($\langle$\{$d$\}$\rangle$). Its sequence-weighted monetary value is \textit{M}($S_2$) + \textit{M}($S_1$) + \textit{M}($S_6$) = 106 + 119 + 138 = 363. Since \textit{SWM}($\langle$\{$d$\}$\rangle$) $\le$ 143.75, we can conclude that $\langle$\{$d$\}$\rangle$ is a \textit{M}-pattern.
    
\begin{theorem}
\label{theorem: SWM}
    \rm Let $Q'$ be a sub-sequence of a sequence $Q$, i.e. $Q' \sqsubseteq$ $Q$ $\wedge$ $Q$ $\sqsubseteq$ $\mathcal{D}$. Then, the monetary value of $Q$ is always less than or equal to the sequence-weighted monetary value of $Q'$, that is \textit{M}($Q'$) $\leq$ \textit{SWM}($Q'$). This property holds for any extension of $Q'$ as well.
\end{theorem}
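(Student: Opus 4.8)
The plan is to prove the inequality one sequence at a time and then sum. First I would fix an arbitrary sequence $Q_i \in \mathcal{D}$ with $Q' \sqsubseteq Q_i$ and look at a single instance of $Q'$ inside $Q_i$. By the definition of an instance (together with the definition of matching), the items that contribute to that instance form a sub-multiset of the items appearing in $Q_i$, each taken with exactly the monetary value it carries in $Q_i$, and no position of $Q_i$ is used twice (the matched itemset indices $\rho_1 < \rho_2 < \dots$ are distinct, and each matched sub-itemset is a sub-itemset of the corresponding itemset of $Q_i$). Since all monetary values are non-negative, the monetary value of that instance is therefore at most $\sum_{k=1}^{|Q_i|} m_k = M(Q_i, S_i)$. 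This bound is independent of which instance was chosen, so taking the maximum over all instances — even when only the compact instances satisfying $\textit{TSP} \le \theta$ are counted, which is merely a sub-collection — yields $M(Q', S_i) \le M(Q_i, S_i)$.

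Next I would sum this per-sequence inequality over precisely the sequences $Q_i \in \mathcal{D}$ that contain $Q'$ as a sub-sequence. By Definition~9 the left side becomes $M(Q') = \sum_{Q_i \in \mathcal{D} \wedge Q' \sqsubseteq Q_i} M(Q', S_i)$, and by the definition of sequence-weighted monetary the right side becomes $\sum_{Q_i \in \mathcal{D} \wedge Q' \sqsubseteq Q_i} M(Q_i, S_i) = \textit{SWM}(Q')$. This gives $M(Q') \le \textit{SWM}(Q')$.

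For the last sentence of the statement — that the same bound controls every extension of $Q'$ — I would invoke transitivity of the sub-sequence relation. If $\epsilon$ is obtained from $Q'$ by an I-Extension or an S-Extension, then $Q' \sqsubseteq \epsilon$; hence any sequence $Q_i$ with $\epsilon \sqsubseteq Q_i$ also satisfies $Q' \sqsubseteq Q_i$. Consequently the index set defining $\textit{SWM}(\epsilon)$ is contained in the one defining $\textit{SWM}(Q')$, and since each term $M(Q_i, S_i)$ is non-negative, dropping the extra terms only increases the sum: $M(\epsilon) \le \textit{SWM}(\epsilon) \le \textit{SWM}(Q')$. Iterating this observation covers any descendant obtained by a chain of extensions, which is exactly what the pruning strategy needs: once $\textit{SWM}(Q') < \gamma \times M(\mathcal{D})$, neither $Q'$ nor any of its extensions can be an M-pattern.

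I do not expect a real obstacle here; the only step that needs genuine care is the first one, namely confirming that the monetary value of a single instance is truly bounded by $M(Q_i, S_i)$ and not by something larger — this rests entirely on the instance being a sub-sequence match (no item counted with an inflated monetary value, no item double-counted) and on monetary values being non-negative. Everything after that is monotone summation and the transitivity of $\sqsubseteq$.
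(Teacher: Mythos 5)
Your proposal is correct and follows essentially the same route as the paper's proof: establish the per-sequence bound $M(Q',Q_i)\le M(Q_i)$, sum over the sequences containing $Q'$, and handle extensions via the sub-sequence relation. The only difference is that you spell out the two steps the paper treats as obvious (why a single instance's monetary value cannot exceed the whole sequence's, and why $\textit{SWM}(\epsilon)\le\textit{SWM}(Q')$ for an extension $\epsilon$), which is a welcome elaboration rather than a new argument.
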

\begin{proof}
   \rm   Let $C$ be the set of sequences containing instance $Q'$. For any sequence $Q'$ $\sqsubseteq$ $Q$ $\wedge$ $Q$ $\sqsubseteq$ $\mathcal{D}$, it is obvious that \textit{M}($Q'$, $Q$) $\leq$ \textit{M}($Q$).
   Therefore, $\sum_{i=1}^{n}$ \textit{M}($Q'$, $Q_i$) $\le$ $\sum_{i=1}^{n}$ \textit{M}($Q_i$, $C$), where $n$ is the size of $C$. For any extension $Q''$ of $Q'$ in $Q_i$, the property \textit{M}($Q''$, $Q_i$ ) $\le$ \textit{M}($Q_i$, $C$) holds. Thus, \textit{M}($Q'$) $\leq$ \textit{SWM}($Q'$) and \textit{M}($Q''$) $\leq$ \textit{SWM}($Q'$) hold true.
\end{proof}

\begin{strategy}[\textit{RF-SWM} strategy]
\label{strgy: rf-swm}
    \rm According to Theorem \ref{theorem: SWM}, \textit{SWM} is an upper bound on the monetary value of 1-sequences. Combined with the downward closure property in the $R$ dimension \cite{hu2013knowledge} and $F$ dimension \cite{srikant1996mining}, if a sequence satisfies the condition \textit{R}($Q'$) $\ge$ \textit{RMinup} $\wedge$ \textit{F}($Q'$) $\ge$ \textit{FMinup} $\wedge$ \textit{SWM} $\ge$ \textit{MMinup}, then the sequence $Q'$ can be further extended, or the node $t$ of $Q'$ can be pruned safely when one of the above condition is not met.
\end{strategy}

\begin{definition}[The RFM-pattern extension monetary value \cite{wang2016efficiently}]
    \rm  Given a sequence $Q$ containing sub-sequences of $\kappa$ and a set $\mathcal{D}$ whose each sequence contains instances of $\epsilon$, and assuming that the size of $\mathcal{D}$ is $n$, the extension monetary value (\textit{EM}) of a prefix RFM-pattern $\kappa$ at position $p$ is defined as: 
        \begin{equation}
        \textit{EM}(\kappa, Q) = 
            \begin{cases}
                 \textit{TM}(\kappa, p, Q), \; \; {\rm if} \; RM(Q /_{\kappa, p}, Q) \geq 0, \\
                 0,\qquad \qquad \qquad {\rm otherwise}.
            \end{cases}
        \end{equation}
    where $\textit{TM}$ is calculated by $\textit{TM}$($\kappa$, $p$, $Q$) =  \textit{max}\{\textit{AUXM}($\kappa$, $p$, $Q$) + \textit{RM}($\kappa$, $p$, $Q$) $\vert$ $\forall$ $p$ $\wedge$ $\kappa$ $\sqsubseteq$ $Q$\}. We further define that:
    \begin{equation}
        \textit{EM}(\kappa) = \sum_{\forall Q \sqsubseteq \mathcal{D} \wedge \kappa \sqsubseteq Q} \textit{EM}(\kappa, Q).
    \end{equation}
\end{definition}

\begin{theorem}
\label{theorem: EM}
    Given an instance $\kappa$, for each instance $\kappa$ where $\kappa$ is a prefix RFM-pattern of $\epsilon$, \textit{M}($\kappa$) $\le$ \textit{EM}($\kappa$) holds true.
\end{theorem}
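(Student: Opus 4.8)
The plan is to establish the bound one sequence at a time and then sum, in the same spirit as the proof of Theorem~\ref{theorem: SWM}. It suffices to show that for every sequence $Q \sqsubseteq \mathcal{D}$ containing an instance of the prefix pattern $\kappa$ we have $\textit{M}(\kappa, Q) \le \textit{EM}(\kappa, Q)$; since $\textit{M}(\kappa) = \sum_{\kappa \sqsubseteq Q} \textit{M}(\kappa, Q)$ and $\textit{EM}(\kappa) = \sum_{\kappa \sqsubseteq Q} \textit{EM}(\kappa, Q)$ by definition, adding these per-sequence inequalities over all such $Q$ yields $\textit{M}(\kappa) \le \textit{EM}(\kappa)$.

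For the per-sequence step, I would first rewrite $\textit{M}(\kappa, Q)$ in terms of $\textit{AUXM}$. Grouping the instances of $\kappa$ in $Q$ according to the ending position $p$ of their last itemset, the per-instance monetary value of any instance ending at $p$ is at most $\textit{AUXM}(\kappa, p, Q)$, so $\textit{M}(\kappa, Q) = \max_{p} \textit{AUXM}(\kappa, p, Q)$, the maximum ranging over all admissible last-itemset positions $p$ of $\kappa$ in $Q$. Next I would use the fact that, by definition, the monetary value of each item is nonnegative, so $\textit{RM}(\kappa, p, Q) = \sum_{i = p+1}^{l} m_i \ge 0$ for every $p$; in particular the residual condition $\textit{RM}(Q/_{(\kappa, p)}, Q) \ge 0$ in the case analysis defining $\textit{EM}$ always holds, hence $\textit{EM}(\kappa, Q) = \textit{TM}(\kappa, p, Q) = \max_{p}\{\textit{AUXM}(\kappa, p, Q) + \textit{RM}(\kappa, p, Q)\}$. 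Combining the two observations,
\begin{align*}
\textit{M}(\kappa, Q) &= \max_{p} \textit{AUXM}(\kappa, p, Q) \\
&\le \max_{p}\big\{\textit{AUXM}(\kappa, p, Q) + \textit{RM}(\kappa, p, Q)\big\} = \textit{EM}(\kappa, Q),
\end{align*}
which is the per-sequence claim, and summing over $Q$ proves the stated inequality.

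Since the purpose of $\textit{EM}(\kappa)$ is to act as an upper bound driving a pruning strategy, I would additionally prove the stronger statement $\textit{M}(\epsilon) \le \textit{EM}(\kappa)$ for every I-extension or S-extension $\epsilon$ of $\kappa$. Fix a sequence $Q$ containing an instance of $\epsilon$ and let $p$ be the ending position of the last itemset of the $\kappa$-prefix embedded in that $\epsilon$-instance. The monetary value of the $\epsilon$-instance decomposes into the contribution of its $\kappa$-prefix, which is at most $\textit{AUXM}(\kappa, p, Q)$, plus the contribution of the extension items, which all lie after position $p$ and therefore sum to at most $\textit{RM}(\kappa, p, Q)$. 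Hence that $\epsilon$-instance has monetary value at most $\textit{AUXM}(\kappa, p, Q) + \textit{RM}(\kappa, p, Q) \le \textit{TM}(\kappa, p, Q) = \textit{EM}(\kappa, Q)$; taking the maximum over $\epsilon$-instances gives $\textit{M}(\epsilon, Q) \le \textit{EM}(\kappa, Q)$, and summing over the sequences containing $\epsilon$ (a subset of those containing $\kappa$) gives $\textit{M}(\epsilon) \le \textit{EM}(\kappa)$.

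The main obstacle is bookkeeping rather than conceptual: one must carefully match the terse definitions of $\textit{AUXM}(\kappa, p, Q)$, $\textit{TM}(\kappa, p, Q)$, and the remaining sequence $Q/_{(\kappa, p)}$ so that the ``$\kappa$-prefix plus extension tail'' decomposition of an $\epsilon$-instance is valid and tight, and confirm that the positions quantified in $\textit{TM}$ genuinely range over all admissible last-itemset positions of $\kappa$ in $Q$. Once nonnegativity of monetary values is invoked, the degenerate second branch of the $\textit{EM}$ definition is excluded and the remaining inequalities are immediate.
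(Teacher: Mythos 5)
Your proof is correct and takes essentially the same route as the paper's: both reduce to the per-sequence inequality $\textit{M}(\kappa, Q) \le \textit{EM}(\kappa, Q)$ by observing that $\textit{M}(\kappa, Q)$ is a maximum of \textit{AUXM} values while $\textit{EM}(\kappa, Q)$ is the maximum of $\textit{AUXM} + \textit{RM}$ with $\textit{RM} \ge 0$, and then sum over the sequences containing $\kappa$. Your direct phrasing, with the explicit appeal to nonnegativity of monetary values to rule out the degenerate branch of the \textit{EM} definition, is cleaner than the paper's somewhat garbled proof by contradiction, and the supplementary bound $\textit{M}(\epsilon) \le \textit{EM}(\kappa)$ for I- and S-extensions is a welcome addition that the paper needs for the \textit{RF-EM} pruning strategy but leaves implicit.
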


\begin{proof}
     \rm   Supposed that instance $\kappa$ can generate $\epsilon$ through I-extension or S-extension, that is, $\kappa$ is the parent node of $\epsilon$, and for each $\kappa$ in the sequence $Q$, \textit{EM}($\kappa$, $Q$) is the maximum monetary value of the sum of \textit{AUXM} and \textit{RM} of each element in the MT-chain. Since sequence $Q$ contains $\kappa$, if \textit{EM}($\kappa$, $Q$) $\leq$ \textit{M}($\kappa$, $Q$), there exists \textit{M}($\kappa$, $Q$) + \textit{RM}($\kappa$, $p_i$, $Q$) $\geq$ \textit{EM}($\kappa$, $Q$), which is inconstant with its definition. Therefore, it always holds that \textit{M}($\kappa$, $Q$) $\le$ \textit{EM}($\kappa$, $Q$). Furthermore, \textit{M}($\kappa$) $\le$ \textit{EM}($\kappa$) holds.
\end{proof}

\begin{strategy}[\textit{RF-EM} strategy]
\label{strgy: rf-em}
   \rm  According to Theorem \ref{theorem: EM}, \textit{EM} is an upper bound of the prefix RFM-pattern. Combined with the downward closure property in the $R$ dimension \cite{hu2013knowledge} and $F$ dimension \cite{srikant1996mining}, the candidate can be safely pruned when there exists \textit{R}($Q'$) $<$ \textit{RMinup} $\vee$ \textit{F}($Q'$) $<$ \textit{FMinup} $\vee$ \textit{EM} $<$ \textit{MMinup} during candidate generation.   
\end{strategy}

\begin{definition}[The prefix RFM-pattern monetary value \cite{wang2016efficiently}]
      \rm  For a sequence $\epsilon$ extended from $\kappa$ in sequence $Q$ at position $p$, we define the monetary value of the prefix pattern (\textit{PM}) of the sequence $\epsilon$ as: 
    \begin{equation}
        \textit{PM}(\epsilon, Q) = 
        \begin{cases}
             \textit{EM}(\kappa, Q), \; {\rm if} \; \kappa \sqsubseteq Q \wedge \epsilon \sqsubseteq Q, \\
             0, \qquad \qquad \qquad {\rm otherwise}.
        \end{cases}
    \end{equation}
    the total residual monetary value of the remaining sequence can be calculated by the formula:
    \begin{equation}
        \textit{PM}(\epsilon) =  \sum_{\forall Q \sqsubseteq \mathcal{D}} \textit{PM}(\epsilon, Q).
    \end{equation}
\end{definition}

For instance, in Fig. \ref{fig: MT-chain of c}, \textit{TM}($\langle$\{$c$\}$\rangle$, $S_3$) is \textit{max}\{21 + 44, 19 + 0\} = \textit{max}\{65, 19\} = 65, \textit{TM}($\langle$\{$c$\}$\rangle$, $S_4$) is max\{15 + 52, 12 + 15\} = \textit{max}\{67, 27\} = 67, and \textit{TM}($\langle$\{$c$\}$\rangle$, $S_6$) is 15 + 123 = 138, respectively. Therefore, \textit{EM}($\langle$\{$c$\}$\rangle$) is equal to 65 + 67 + 138 = 270. In Fig. \ref{fig: MT-chain of ca}, with respect to the extended item $a$, according to the definition, \textit{PM}($\langle$\{$c$\}, \{$a$\}$\rangle$, $S_3$) = \textit{EM}($\langle$\{$c$\}$\rangle$, $S_3$) = 65, \textit{PM}($\langle$\{$c$\}, \{$a$\}$\rangle$, $S_6$) = \textit{EM}($\langle$\{$c$\}$\rangle$, $S_6$) = 138. Therefore, \textit{PM}($\langle$\{$c$\}, \{$a$\}$\rangle$) is the sum of 65 and 138, which is 203.

\begin{theorem}
\label{theorem: PM}
    Given a sequence $\kappa$, which can generate $\epsilon$ through I-extension or S-extension, \textit{M}($\epsilon$) $\le$ \textit{PM}($\epsilon$) holds true.
\end{theorem}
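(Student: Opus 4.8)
The plan is to prove the inequality one sequence at a time and then sum. First I would record three easy reductions. Since $\kappa$ is a prefix of $\epsilon$ we have $\kappa \sqsubseteq \epsilon$, so every sequence $Q \sqsubseteq \mathcal{D}$ that contains $\epsilon$ also contains $\kappa$; hence by definition $\textit{PM}(\epsilon, Q) = \textit{EM}(\kappa, Q)$ for each such $Q$, while $\textit{PM}(\epsilon, Q) \ge 0$ for all remaining sequences. Second, because all monetary values are non-negative, $\textit{RM}(Q/_{\kappa,p}, Q) \ge 0$ always holds, so $\textit{EM}(\kappa, Q) = \textit{TM}(\kappa, p, Q) = \textit{max}\{\textit{AUXM}(\kappa, p, Q) + \textit{RM}(\kappa, p, Q)\}$, the maximum being over all ending positions $p$ of instances of $\kappa$ in $Q$. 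Consequently it suffices to establish the per-sequence bound $\textit{M}(\epsilon, Q) \le \textit{EM}(\kappa, Q)$ for every $Q$ with $\epsilon \sqsubseteq Q$, since summing over $\mathcal{D}$ then gives $\textit{M}(\epsilon) = \sum_{\epsilon \sqsubseteq Q}\textit{M}(\epsilon, Q) \le \sum_{\epsilon \sqsubseteq Q}\textit{PM}(\epsilon, Q) \le \sum_{Q \sqsubseteq \mathcal{D}}\textit{PM}(\epsilon, Q) = \textit{PM}(\epsilon)$, where the last inequality uses $\textit{PM}(\epsilon, Q) \ge 0$.

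For the per-sequence bound, fix $Q$ with $\epsilon \sqsubseteq Q$ and choose the instance of $\epsilon$ in $Q$ that attains $\textit{M}(\epsilon, Q)$ (recall the monetary value of a pattern in a sequence is the maximum taken over its instances). That instance splits into an instance of the prefix $\kappa$, ending at some item position $p$, together with the single extension item $i$ appended to $\kappa$ to form $\epsilon$. Because the monetary value of an instance is the sum of the monetary values of its items, $\textit{M}(\epsilon, Q)$ equals the monetary value of the $\kappa$-part plus $m_i$. The $\kappa$-part is one of the instances of $\kappa$ ending at position $p$, so its monetary value is at most $\textit{AUXM}(\kappa, p, Q)$. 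For the extension item, in both cases — $\epsilon = \kappa \bigoplus i$, where $i$ lies in the last itemset of $\kappa$ but after all of its items, and $\epsilon = \kappa \bigotimes i$, where $i$ lies in an itemset strictly after the last itemset of $\kappa$ — the item $i$ occupies an item position strictly greater than $p$, so $m_i$ is one of the summands of $\textit{RM}(\kappa, p, Q)$ and therefore $m_i \le \textit{RM}(\kappa, p, Q)$. Combining, $\textit{M}(\epsilon, Q) \le \textit{AUXM}(\kappa, p, Q) + \textit{RM}(\kappa, p, Q) \le \textit{TM}(\kappa, p, Q) = \textit{EM}(\kappa, Q) = \textit{PM}(\epsilon, Q)$, which is exactly the per-sequence inequality.

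The summation of the first paragraph then finishes the proof. I note that, alternatively, once one has $\textit{M}(\epsilon, Q) \le \textit{EM}(\kappa, Q)$ for $Q$ containing $\epsilon$, this is essentially the extension-level content already underlying Theorem \ref{theorem: EM} and the \textit{RF-EM} strategy, so Theorem \ref{theorem: PM} can be viewed as a direct corollary once $\textit{PM}(\epsilon, Q)$ is identified with $\textit{EM}(\kappa, Q)$ on the relevant sequences. The main obstacle is the bookkeeping in the I-extension case: one must justify that even though $i$ sits in the same itemset as the last itemset of $\kappa$, it still occupies an item position beyond $p$ (items inside an itemset are totally ordered and $i$ is appended last), so that $m_i$ is genuinely captured by $\textit{RM}(\kappa, p, Q)$; the non-negativity of monetary values is the other ingredient, as it is what forces $\textit{EM} = \textit{TM}$ and keeps every $\textit{PM}(\epsilon, Q)$ term non-negative in the final sum.
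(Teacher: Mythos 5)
Your proof is correct, and it follows the same skeleton as the paper's --- identify $\textit{PM}(\epsilon, Q)$ with $\textit{EM}(\kappa, Q)$ on the sequences containing $\epsilon$, then invoke the $\textit{EM}$ upper bound and sum --- but you supply a substantive step that the paper's own proof omits. The paper's argument is a two-line reduction: it states $\textit{PM}(\epsilon, Q) = \textit{EM}(\kappa, Q)$ by definition and then appeals to Theorem \ref{theorem: EM}, asserting that the conclusion holds ``if and only if'' $\textit{M}(\kappa) \le \textit{EM}(\kappa)$. But Theorem \ref{theorem: EM} as stated only bounds the monetary value of the \emph{prefix} $\kappa$, whereas Theorem \ref{theorem: PM} needs a bound on the monetary value of the \emph{extension} $\epsilon$; the logical bridge between the two is exactly your per-sequence inequality $\textit{M}(\epsilon, Q) \le \textit{AUXM}(\kappa, p, Q) + \textit{RM}(\kappa, p, Q)$, obtained by splitting the best instance of $\epsilon$ into its $\kappa$-part (bounded by $\textit{AUXM}$) and the appended item $i$ (whose monetary value is a summand of $\textit{RM}$ because $i$ sits at an item position strictly after $p$, in both the I-extension and S-extension cases). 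That decomposition, together with your explicit handling of non-negativity (which forces $\textit{EM} = \textit{TM}$ and keeps the extra $\textit{PM}(\epsilon, Q)$ terms in the sum harmless), is the real content of the theorem; the paper leaves it implicit. So your write-up is not just correct but strictly more complete than the paper's, at the cost of some bookkeeping the paper chose to elide.
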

\begin{proof}
     \rm   Based on the definition, for each sequence $\epsilon$ generated by $\kappa$ in sequence $Q$, \textit{PM}($\epsilon$, $Q$) is equal to \textit{EM}($\kappa$, $Q$). In accordance with the theorem \ref{theorem: EM}, the conclusion \textit{M}($\epsilon$) $\le$ \textit{PM}($\epsilon$) holds true if and only if \textit{M}($\kappa$) $\le$ \textit{EM}($\kappa$).
\end{proof}

\begin{strategy}[\textit{PM} strategy]
\label{strgy: pm}
     \rm   According to Theorem \ref{theorem: PM}, the monetary value of the prefix pattern is an upper bound of extension candidates. Therefore, the item can be safely pruned if it satisfies \textit{PM} $\le$ \textit{MMinup} during the process of determining whether to extend or not.
\end{strategy}

\subsection{SeqRFM: Sequential RFM-pattern mining algorithm}

\begin{algorithm}[ht]
    \small
    \caption{The SeqRFM algorithm}
    \label{alg: SeqRFM}
    \LinesNumbered
    \KwIn{The MT-database $\mathcal{D}$; the decay rate $\delta$; the absolute minimum recency threshold $\alpha$; the relative minimum frequency threshold $\beta$; the relative minimum monetary threshold $\gamma$; the maximum time-span length $\theta$.} 
    \KwOut{The set of all RFM-patterns \textit{RFMs}.}
 
    initialize \textit{RMinsup} $\leftarrow$ $\alpha$, \textit{FMinsup} $\leftarrow$ $\beta$ $\times$ $F$($\mathcal{D}$), and \textit{MMinsup} $\leftarrow$ $\gamma$ $\times$ $M$($\mathcal{D}$)\;
    initialize $I$ $\leftarrow$ \{$s$ $\vert$ $s$ $\in$ $\mathcal{D}$ $\wedge$ $s$ {\rm is a 1-sequence $\wedge$ \textit{TSP}($s$) $\le$ $\theta$}\}\;
    \For{{\rm each 1-sequence} $s$ $\in$ $I$}{
        \If{\textit{R}($s$) $<$ \textit{RMinsup} $\vee$ \textit{F}($s$) $<$ \textit{FMinsup} $\vee$ \textit{SWM}($s$) $<$ \textit{MMinsup}}{
        remove $s$ from $I$ and $\mathcal{D}$\;
        }   
        update $I$ and $\mathcal{D}$\;
    } 
    scan $\mathcal{D}$ and construct the node $t$ of each 1-sequence\; 
    add nodes into the set $N^\star$\;
    \For{{\rm each node} $t$ $\in$ $N^\star$}{
        calculate the values of \textit{R(t.seq)}, \textit{F(t.seq)} and \textit{M(t.seq)}\;
        \If{\textit{R(t.seq)} $\ge$ \textit{RMinup} $\wedge$ \textit{F(t.seq)} $\ge$ \textit{FMinup} $\wedge$ \textit{M(t.seq)} $\ge$ \textit{MMinup}}{
            add \textit{t.seq} into \textit{RFMs}\;
        }
        \If{\textit{R(t.seq)} $\ge$ \textit{RMinup} $\wedge$ \textit{F(t.seq)} $\ge$ \textit{FMinup}}{
            \textbf{call} AlgoRecursion($t$, \textit{t.mtchain})\; 
        }
    }
    \textbf{return} \textit{RFMs}
\end{algorithm}

\begin{algorithm}[ht]
    \small
    \caption{AlgoRecursion}
    \label{alg: AlgoRecursion}
    \LinesNumbered
    \KwIn{the node $t$ of the prefix RFM-pattern, the MT-Chain of node $t$.} 

    scan \textit{MT-Chain}: \\
    insert the I-extension items into \textit{ilist} and the S-extension items into \textit{slist}\;
    remove items of low \textit{PM} from \textit{ilist} and \textit{slist}\;
    remove items of large time-span length \textit{TSP} from \textit{ilist} and \textit{slist}\;
    \For{{\rm each node} $ti$ $\in$ \textit{ilist}}{
        \textit{Iseq} = \textit{t.seq} $\bigoplus$ $ts$\;
        calculate the values of \textit{R(\textit{Iseq})}, \textit{F(\textit{Iseq})}, \textit{M(\textit{Iseq})}, \textit{EM}(\textit{Iseq})\;
        \If{\textit{R(\textit{Iseq})} $\geq$ \textit{RMinsup} $\wedge$ \textit{F(\textit{Iseq})} $\geq$ \textit{FMinsup} $\wedge$ \textit{M(\textit{Iseq})} $\geq$ \textit{MMinsup}}{
            add \textit{Iseq} into \textit{RFMs}\;
        }
        \If{\textit{R(\textit{Iseq})} $\geq$ \textit{RMinsup} $\wedge$ \textit{F(\textit{Iseq})} $\geq$ \textit{FMinsup} $\wedge$ \textit{EM(\textit{Iseq})} $\geq$ \textit{MMinsup}}{
            \textbf{call} AlgoRecursion($ti$, \textit{ti.mtchain})\;
        }
    }
    \For{{\rm each node} $ts$ $\in$ \textit{slist}}{
        \textit{Sseq} = \textit{t.seq} $\bigotimes$ $ts$\;
        calculate the values of \textit{R(\textit{Sseq})}, \textit{F(\textit{Sseq})}, \textit{M(\textit{Sseq})}, \textit{EM}(\textit{Sseq})\;
        \If{\textit{R(\textit{Sseq})} $\geq$ \textit{RMinsup} $\wedge$  \textit{F(\textit{Sseq})} $\geq$ \textit{FMinsup} $\wedge$ \textit{M(\textit{Sseq})} $\geq$ \textit{MMinsup}}{
            add \textit{Sseq} into \textit{RFMs}\;
        }
        \If{\textit{R(\textit{Sseq})} $\geq$ \textit{RMinsup} $\wedge$ \textit{F(\textit{Sseq})} $\geq$ \textit{FMinsup} $\wedge$ \textit{EM(\textit{Sseq})} $\geq$ \textit{MMinsup}}{
            \textbf{call} AlgoRecursion($ts$, \textit{ts.mtchain})\;
        }
    }
\end{algorithm}

In Algorithm \ref{alg: SeqRFM}, the process of mining \textit{RFMs} through depth-first search is presented, and the output is the set of all \textit{RFMs}, which are stored in a variable named \textit{RFMs}. First, the algorithm scans the database $\mathcal{D}$ and removes the unpromising 1-sequences based on the \textit{RF-SWM} strategy. Second, the algorithm scans the updated database, constructs nodes, and inserts them into the RFM-tree. Third, SeqRFM begins to scan the nodes of the RFM-tree. By checking whether the pattern of a node $t$ satisfies the three dimensions' thresholds, the algorithm determines whether  \textit{t.seq} should be added to \textit{RFMs} or not. Subsequently, according to the downward closure property in dimensions of $R$ and $F$, the algorithm decides whether to prune or extend the pattern of  \textit{t.seq}. Finally, the algorithm outputs the set of \textit{RFMs} by recursively calling Algorithm \ref{alg: AlgoRecursion}.

As shown in Algorithm \ref{alg: AlgoRecursion}, the process of recursively mining \textit{RFMs} by I-extension and S-extension is performed, where the input is the node $t$ and its MT-chain. First, the MT-chain of node $t$ is scanned, and the I-extension and S-extension items are added to lists, respectively. Then, items with a low \textit{PM} value or a large time-span length \textit{TSP} are removed from lists, which are based on the \textit{PM} strategy or the compact rule. As for I-extensions, the algorithm checks if each extended pattern $\langle$\textit{t.seq} $\bigoplus$ \textit{ti}$\rangle$ satisfies the three dimensions thresholds, to determine if $\langle$\textit{t.seq} $\bigoplus$ \textit{ti}$\rangle$ should be added to \textit{RFMs}. Subsequently, according to the \textit{RF-EM} strategy, the RFM-pattern $\langle$\textit{t.seq} $\bigoplus$ \textit{ti}$\rangle$ is either used in a recursive call of AlgoRecursion to find extended patterns or it is pruned. The algorithm repeats the above process for each item in the list of S-extensions.


\subsection{MSeqRFM: Maximal sequential RFM-pattern mining algorithm}  \label{alg: MSeqRFM}

When dealing with a large database, the patterns with high recency, high frequency, and high monetary values may still be too numerous. Among them are many redundant sequences, which are the sub-sequences of other patterns in the result set. The maximal checking strategy is proposed to filter out the redundant \textit{RFMs} to compress the mining results.

\begin{definition}[Maximal RFM-pattern]
  \rm  A RFM-pattern $P$ is a maximal RFM-pattern (\textit{MRFM}) if none of its super-sequences are \textit{RFMs}. Otherwise, $P$ is not a maximal RFM-pattern.
\end{definition}

For example, given a set $C$ of \textit{RFMs}, assuming that $C$ = $\langle$\{$(abc)$ $(def)$ $(e)$\}, \{$(c)$ $(e)$\}, \{$(de)$\}, \{$(e)$ $(f)$\}$\rangle$, \{$(c)$ $(e)$\} and \{$(de)$\} are sub-sequences of \{$(abc)$ $(def)$ $(e)$\}, so those two sequences are not \textit{MRFMs}, and $(abc)$ $(def)$ $(e)$\} is a maximal pattern in $C$.
    
\begin{strategy}[Maximal checking strategy \cite{fournier2014vmsp}]
     \rm   Given an empty set \textit{MRFMs} which is the set of maximal patterns, the following steps are performed to check whether a new pattern $P$ is maximal or not:
    \begin{enumerate}
        \item Super-sequence checking: if there exists a pattern $P'$ in \textit{MRFMs} such that $P$ is a sub-sequence of $P'$, then $P$ is not a maximal pattern and is excluded from \textit{MRFMs}. Otherwise, $P$ is included in \textit{MRFMs}.
        
        \item Sub-sequence checking: if a pattern $P$ is a super-sequence of any pattern $P'$ in \textit{MRFMs}, then $P'$ is not a maximal pattern and should be removed.
    \end{enumerate}
\end{strategy}

The MSeqRFM algorithm is developed by incorporating the maximal checking strategy in the SeqRFM algorithm, which achieves the goal of mining maximal \textit{RFMs}. Therefore, MSeqRFM is similar to SeqRFM. MSeqRFM uses a boolean function when an RFM-pattern $P$ is added into \textit{RFMs} (lines 13-15 in Algorithm \ref{alg: SeqRFM}, lines 7-9 and 16-18 in Algorithm \ref{alg: AlgoRecursion}), which maintains the set \textit{MRFMs} of maximal \textit{RFMs}. If the boolean function returns true, $P$ is considered as a \textit{MRFM} and added to \textit{MRFMs} afterward. Furthermore, the previous patterns in \textit{MRFMs} are also checked to determine if they are still \textit{MRFMs}. Otherwise, the invalid patterns are removed. The boolean function is shown in Algorithm \ref{alg: MaximalJudge} and the steps are as follows:

\begin{enumerate}[\textbf{Step} 1.]
    \item Differently from  Algorithm \ref{alg: SeqRFM}, a global empty set \textit{MRFMs} is initialized to store all maximal \textit{RFMs}.
    
    \item When a new RFM-pattern $P$ is added to the set of \textit{RFMs}, the pattern $P$ is checked by the boolean function to determine if it is maximal.
\end{enumerate}

\begin{algorithm}[ht]
    \small
    \caption{MaximalJudge}
    \label{alg: MaximalJudge}
    \LinesNumbered 
    \KwIn{the checking pattern $P$.}
    \KwOut{\textbf{true} or \textbf{false}.}
    
    initialize boolean type of \textit{Judge} $\leftarrow$ \textbf{true}\;
    \eIf{\textit{MRFMs} is $\varnothing$}{
        \textbf{return} \textit{Judge}\;
    }{
        \For{{\rm each pattern} $P'$$\in$ \textit{MRFMs}}{
            \If{$P$ $\sqsubseteq$ $P'$}{
                \textit{Judge} $\leftarrow$ \textbf{false}\;
            }
            \If{$P'$ $\sqsubseteq$ $P$}{
                remove $P'$ from \textit{MRFMs}\;
            }
        }
        \textbf{return} \textit{Judge}\;
    }
\end{algorithm}

\section{Experiments} \label{sec: experiments}

To evaluate the effectiveness and memory usage of the proposed SeqRFM algorithm, as well as the data compression capability of the proposed MSeqRFM algorithm, comprehensive experiments are conducted as follows: Firstly, SeqRFM is compared with RFM-PostfixSpan. The evaluation involves a systematic exploration of various parameters, notably the three thresholds, and the specified time-span. These parameter variations are useful in revealing the nuanced performance differences between the two algorithms. Secondly, MSeqRFM is compared to SeqRFM in terms of the number of candidates generated. Note that if an algorithm's memory consumption exceeds the Java heap limit during an experiment, it is regarded as unable to complete the pattern discovery, and its runtime and memory usage are not reported.  

The experimentation was carried out on a 64-bit Windows 11 PC featuring a 12$\rm ^{th}$ Gen Intel (R) Core™ i7-12700F CPU and 16 GB of RAM. All algorithms under consideration were implemented in Java. The source code and datasets are available at GitHub https://github.com/DSI-Lab1/SeqRFM. Further elaboration on these experiments is provided subsequently. 

\subsection{Dataset description and parameters setting}

\textbf{Dataset description.} To evaluate algorithm performance, we employ a combination of three real-life datasets and three synthetic datasets. The BMS, e\_shop, and online\_retail datasets from SPMF\footnote{https://www.philippe-fournier-viger.com/spmf/}  are derived from real-life e-commerce transactions   \cite{fournier2016spmf}. The synthetic datasets are generated by the IBM data generator \cite{agrawal1995mining} with randomly assigned timestamps for the itemsets.

The characteristics of these datasets are shown in Table \ref{table: datasets}, where the symbol $|\mathcal{D}|$ denotes the number of sequences within the dataset; $|I|$ represents the count of distinct items. \textit{Max}($S$) and \textit{Avg}($S$) signify the maximum and average sequence lengths, respectively. The notation \textit{Avg}(\textit{IS}) indicates the average count of itemsets within sequences, and \#\textit{Ele} represents the average item count within each itemset. The following is a brief description of each dataset:

\begin{itemize}
    \item \textbf{BMS}: This dataset contains clickstream data from an e-commerce site. It is a sparse dataset with short sequences and is a single-item-based dataset (each itemset has only a single item).

    \item \textbf{e\_shop}: This dataset contains clickstream data from an online store offering clothing for pregnant women. It is a dense dataset with especially long sequences.

    \item \textbf{online\_retail}: This dataset contains transactions from a UK-based and registered, online store. It is a moderately dense dataset with moderately long sequences.

    \item \textbf{Syn\_20K}: This synthetic dataset has 20,000 sequences. It is a moderately dense dataset with long sequences.

    \item \textbf{Syn\_40K}: This synthetic dataset has twice as many sequences than Syn\_20K, and the number of distinct items 
 is slightly more. It is a moderately dense dataset with moderately long sequences.

    \item \textbf{Syn\_80K}: This is a synthetic dataset with around 80,000 sequences. It is a moderately dense dataset with moderately long sequences.  
\end{itemize}

\begin{table}[ht]
    \renewcommand{\arraystretch}{1.15}
    \caption{The characteristics of the datasets.}
    \label{table: datasets}
    \centering
    \small
    \scalebox{0.9}{ 
    \begin{tabular}{|c|c|c|c|c|c|c|c|}
        \hline
         \textbf{Dataset} & $|\mathcal{D}|$ & $|I|$ & \textit{Max(S)} & \textit{Avg(S)} & \textit{Avg(IS)} & \textit{\#Ele} \\ 
         \hline\hline
        \textbf{BMS} & 59,601 & 497 & 268 & 3.51 & 2.51 & 1.40 \\
         \hline
         \textbf{e\_shop} & 24,026 & 317 & 1755 & 61.99 & 6.89 & 9.00\\ 
         \hline         
         \textbf{online\_retail} & 4,383 & 10,157 & 2403 & 48.23 & 5.36 & 9.00\\ 
         \hline
         \textbf{Syn\_20k} & 20,000 & 7,442 & 213 & 26.98  & 6.22 & 4.34 \\ 
         \hline
         \textbf{Syn\_40k} & 40,000 & 7,537 & 213 & 26.85 & 6.20 & 4.33 \\ 
         \hline
         \textbf{Syn\_80k} & 79,718 & 7,584 & 213 & 26.80 & 6.20 & 4.32  \\ 
         \hline
    \end{tabular}}
\end{table}

\textbf{Parameter setting.} Based on the problem statement, experiments about efficiency analysis and memory analysis focus on the relative frequency threshold $\beta$, and the parameter analysis focuses on the relative monetary threshold $\gamma$ and the time-span length $\theta$. In addition, the value of $\delta$ is set to 0.009 in all experiments. In the following experiments, RFM-PostfixSpan consumes too much memory to discover patterns for small thresholds or in large databases. Moreover, the parameters are set to guarantee that most experimental results can be obtained through multiple adjustments. The parameter setting values of efficiency analysis and memory analysis are listed in Table \ref{table: parameters}.

\subsection{Efficiency analysis}

To evaluate the efficiency of the proposed SeqRFM algorithm,  we use the running time as a performance index to measure the efficiency and compare it with the RFM-PostfixSpan algorithm. Since the frequency dimension is important in sequential pattern mining, and the utility of the HUS-Span algorithm has been evaluated in a prior study \cite{wang2016efficiently}, we conducted experiments with different relative frequency threshold $\beta$ for each dataset.

\begin{table}[ht]
    \caption{The parameter values for efficiency and memory analysis.}
    \label{table: parameters}
    \centering
    \small
    \begin{tabular}{|c|ccccc|}
        \hline
         \textbf{Dataset} & $\delta$ & $\alpha$ & $\beta$ & $\gamma$ & $\theta$ \\ 
         \hline
        \textbf{BMS} & 0.009 & 30 & variable & 0.00002 & 4000 \\
         \textbf{e\_shop} &0.009 & 8000 & variable & 0.04 & 4000\\ 
         \textbf{online\_retail} &0.009 & 300 & variable & 0.02 & 4000\\ 
         \textbf{Syn\_20k} &0.009 & 20 & variable & 0.00002 & 4000 \\ 
         \textbf{Syn\_40k} &0.009 & 70 & variable & 0.00002 & 4000 \\ 
         \textbf{Syn\_80k} &0.009 & 120 & variable & 0.00002 & 4000  \\ 
         \hline
    \end{tabular}
\end{table}

As shown in Fig. \ref{fig: runtime}, the experimental results show that both algorithms have a common trend that as the relative threshold $\beta$ increases, the running time of both algorithms decreases. Moreover, SeqRFM achieves a significant improvement in efficiency over  RFM-PostfixSpan on different datasets. In particular, in datasets BMS, Syn\_20K, and Syn\_40K, SeqRFM takes 66.67\% less time. As the relative threshold $\beta$ decreases, the runtime efficiency of SeqRFM decreases by almost an order of magnitude. In Syn\_80K, SeqRFM performs even better than RFM-PostfixSpan, reducing the time by an order of magnitude in each experiment. That is because SeqRFM has better pruning strategies with tighter upper bounds than RFM-PostfixSpan. The dataset Syn\_80K has far more sequences, so SeqRFM benefits more from using pruning strategies compared to other datasets. For the datasets e\_shop and online\_retail, experimental results cannot be acquired with RFM-PostfixSpan, which also indicates that SeqRFM is capable of discovering valuable patterns when resources are limited. The differences between the two groups of datasets are that e\_shop and online\_retail are dense datasets with moderately long or especially long sequences, which leads to more memory usage for RFM-PostfixSpan and brings out different efficiency improvements. The experimental results indicate that the SeqRFM algorithm effectively improves time efficiency on datasets of different types, especially on datasets with dense and long sequences.

\begin{figure*}[ht]
    \centering
    \includegraphics[clip,scale=0.28]{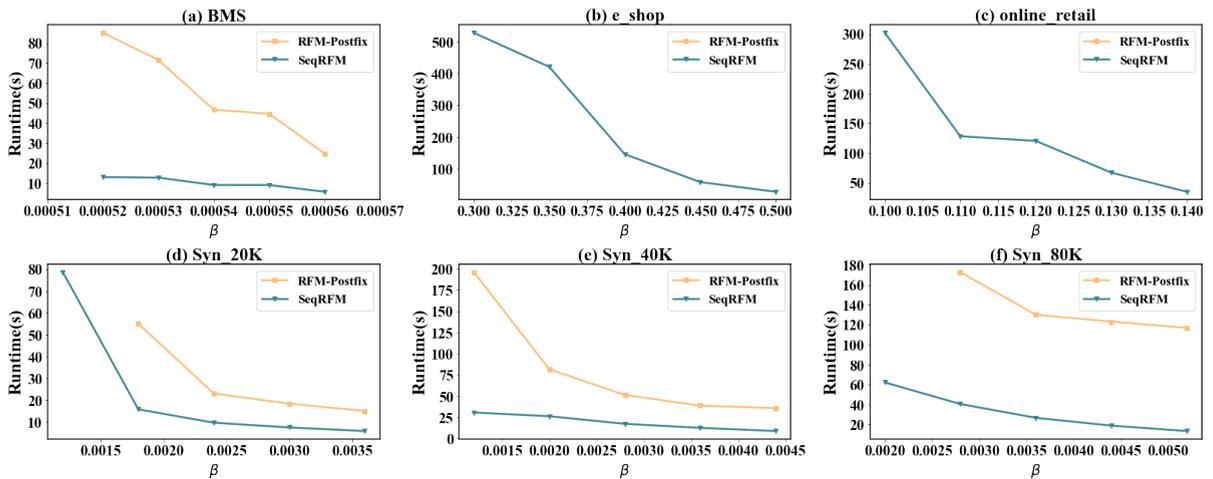}
    \caption{Runtime in each dataset under different values of $\beta$.}
    \label{fig: runtime}
\end{figure*}

\subsection{Memory analysis}

The experimental results of memory usage are shown in Fig. \ref{fig: memory}. As we can see from the figure, as the relative frequency threshold $\beta$ increases, the memory usage decreases in both algorithms. The reason is that as the threshold increases, the number of promising patterns decreases, so the memory required for storing them decreases. In addition, the memory usage of SeqRFM is mostly lower than 10\% of the memory usage of RFM-PostfixSpan among the valid experimental results, which demonstrates the advantage of SeqRFM in memory usage performance. This is because the effective pruning strategies filter out more unpromising extended patterns, which reduces the memory requirement for storing auxiliary information. Note that on datasets e\_shop, online\_retail, Syn\_20K, and Syn\_80K, the blank bar of RFM-PostfixSpan means that the experiments cannot be completed with the available  16 GB of RAM, which illustrates the superiority of SeqRFM to run with limited memory.
The difference in memory usage results is because more information must be stored. Datasets e\_shop and Syn\_80K have a huge number of sequences; accordingly, the algorithms generate more patterns and store more information. The dataset online\_retail is a dense dataset with moderately long sequences that possess multiple patterns found by depth-first search. For the  Syn\_20K dataset, more candidates are generated for low threshold values, which results in finding more patterns and storing more information. The experimental results suggest that the SeqRFM algorithm has lower memory usage than the RFM-PostfixSpan algorithm on datasets of different types, especially large or dense datasets.

\begin{figure*}[ht]
    \centering
    \includegraphics[clip,scale=0.28]{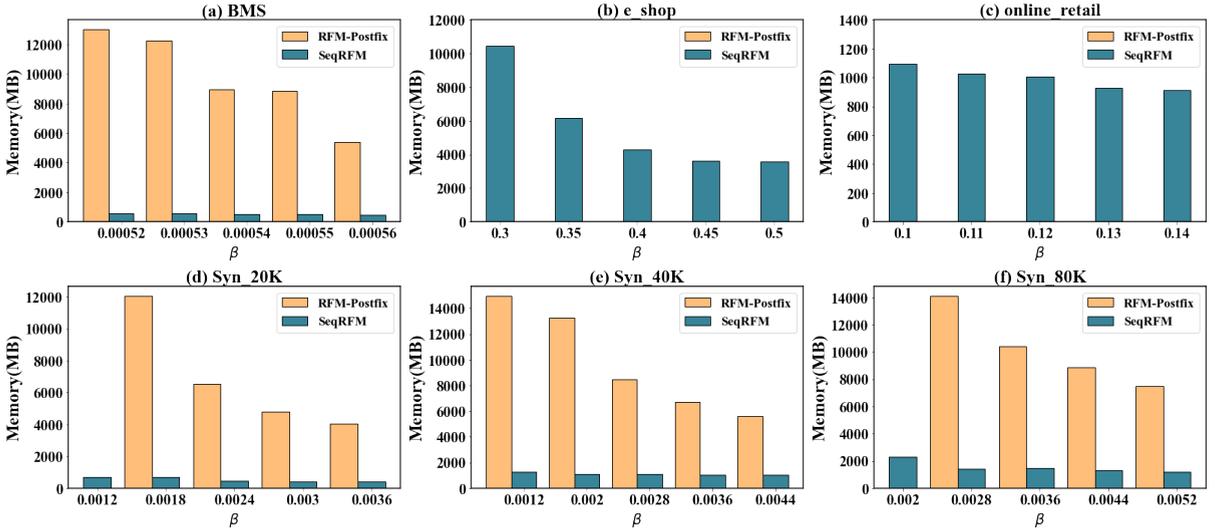}
    \caption{Maximum memory usage in each dataset under different values of $\beta$.}
    \label{fig: memory}
\end{figure*}

\subsection{Parameter analysis}

According to the above experiments based on the variable $\beta$, the SeqRFM algorithm is found to be faster and have a lower memory footprint. In this subsection, we perform additional experiments on distinct datasets to study the effect on SeqRFM of changing the values of the parameters $\gamma$ and $\theta$. The setting of other parameters and the experimental results are shown in Table \ref{table: parameters2}, where \#cand is the number of candidates and \#RFMs means the number of discovered \textit{RFMs}. In these experiments, runtimes are recorded in seconds. If the maximum time-span length $\theta$ is kept constant for all datasets e\_shop, Syn\_20K, Syn\_40K, and Syn\_80K, as the relative monetary threshold $\gamma$ slowly increases, the runtime decreases with some fluctuations. On the BMS and online\_retail datasets, as the relative monetary threshold $\gamma$ rapidly increases, the runtime sharply decreases. If the relative monetary threshold $\gamma$ is kept constant on the six datasets, as the maximum time-span length $\theta$ increases, the runtime decreases considerably. The reasons for the different runtimes of the parameters $\gamma$ and $\theta$ are different. The former affects runtime efficiency by increasing the threshold to prune the candidates. The latter affects runtime efficiency by using the constraint of time-span to reduce the search space. To sum up the above analysis, the relative monetary threshold $\gamma$ and the maximum time-span length $\theta$ are important parameters for the SeqRFM algorithm, which impact the runtime efficiency.

\begin{table*}[ht]
    \renewcommand{\arraystretch}{1.2}
    \caption{The results of parameter analysis.}
    \label{table: parameters2}
    \centering
    \small
    \resizebox{\linewidth}{!}{ 
    \begin{tabular}{|c|c|c|c|c|c|c|c|c|c|c|c|c|c|}
        \hline
         \textbf{Dataset} & \multicolumn{13}{c|}{\textbf{BMS}} \\ 
         \hline
         \multirow{4}{*}{\shortstack{\\ $\delta$: 0.009 \\ $\alpha$: 20 \\$\beta$: 0.000375}} & 
         $\theta$ & 225 & 237 & 250 & 225 & 237 & 250 & 225 & 237 & 250 & 225 & 237 & 250   \\ \cline{2-14} &
         $\gamma$ & \multicolumn{3}{c|}{0.015} & \multicolumn{3}{c|}{0.018} & \multicolumn{3}{c|}{0.021} & \multicolumn{3}{c|}{0.024} \\ \cline{2-14} &
         Runtime & 503.39 & 956.97 & 1893.87 & 470.59 & 874.89 & 1795.53 & 319.39 & 423.79 & 823.35 & 50.56 & 64.89 & 87.77  \\ \cline{2-14} &
         \#cand & 173,776,663 & 340,336,562 & 670,662,489 & 130,628,043 & 251,177,789 & 501,365,834 & 68,202,052 & 87,744,776 & 179,992,903 & 8,317,600 & 10,102,151 & 13,788,361   \\ \cline{2-14} &
         \#RFMs & 10 & 10 & 10 & 5 & 5 & 5 & 4 & 4 & 4 & 2 & 2 & 2    \\ 
         \hline
         \textbf{Dataset} & \multicolumn{13}{c|}{\textbf{e\_shop}} \\ 
         \hline
         \multirow{4}{*}{\shortstack{\\ $\delta$: 0.009 \\ $\alpha$: 7000 \\$\beta$: 0.01}} & 
         $\theta$ & 25 & 50 & 75 & 25 & 50 & 75 & 25 & 50 & 75 & 25 & 50 & 75   \\ \cline{2-14} &
         $\gamma$ & \multicolumn{3}{c|}{0.002} & \multicolumn{3}{c|}{0.01} & \multicolumn{3}{c|}{0.018} & \multicolumn{3}{c|}{0.026} \\ \cline{2-14} &
         Runtime & 36.03 & 114.85 & 258.13 & 36.50 & 113.90 & 256.48 & 35.62 & 114.38 & 257.75 & 35.42 & 114.45 & 258.74   \\ \cline{2-14} &
         \#cand & 11,793 & 38,774 & 80,545 & 11,792 & 38,773 & 80,545 & 11,787 & 38,758 & 80,535 & 11,779 & 38,695 & 80,453   \\ \cline{2-14} &
         \#RFMs & 270 & 857 & 1,756 & 262 & 849 & 1,748 & 165 & 742 & 1,641 & 76 & 486 & 1,366 \\ 
         \hline
         \textbf{Dataset} & \multicolumn{13}{c|}{\textbf{online\_retail}} \\ 
         \hline
         \multirow{4}{*}{\shortstack{\\ $\delta$: 0.009 \\ $\alpha$: 220 \\$\beta$: 0.05}} & 
         $\theta$ & 100 & 125 & 150 & 100 & 125 & 150 & 100 & 125 & 150 & 100 & 125 & 150 \\ \cline{2-14} &
         $\gamma$ & \multicolumn{3}{c|}{0.01} & \multicolumn{3}{c|}{0.02} & \multicolumn{3}{c|}{0.03} & \multicolumn{3}{c|}{0.04} \\ \cline{2-14} &
         Runtime & 358.62 & 691.75 & 1,502.09 & 333.49 & 646.40 & 1341.55 & 293.38 & 547.60 & 1106.47 & 244.31 & 463.85 & 940.12   \\ \cline{2-14} &
         \#cand & 9,245,222 & 16,346,343 & 32,835,195 & 5,798,574 & 11,195,592 & 19,845,914 & 3,144,273 & 5,890,170 & 9,629,340 & 1,572,272 & 2,773,010  & 4,613,414   \\ \cline{2-14} & 
         \#RFMs & 71,762 & 134,559 & 276,432 & 3,629 & 7,606 & 11,543 & 100 & 157 & 211 & 5 & 9 & 9    \\ 
         \hline
         \textbf{Dataset} & \multicolumn{13}{c|}{\textbf{Syn\_20k}}\\ 
         \hline
         \multirow{4}{*}{\shortstack{\\ $\delta$: 0.009 \\ $\alpha$: 8 \\$\beta$: 0.001}} & 
         $\theta$ & 25 & 50 & 75 & 25 & 50 & 75 & 25 & 50 & 75 & 25 & 50 & 75  \\ \cline{2-14} &
         $\gamma$ & \multicolumn{3}{c|}{0.00002} & \multicolumn{3}{c|}{0.0001} & \multicolumn{3}{c|}{0.00018} & \multicolumn{3}{c|}{0.00026} \\ \cline{2-14} &
         Runtime & 52.70 & 172.00 & 212.25 & 46.07 & 140.70 & 167.04 & 44.90 & 136.24 & 151.89 & 44.11 & 138.57 & 152.71  \\ \cline{2-14} &
         \#cand & 94,832,533 & 337,551,411 & 407,677,058 & 16,857,270 & 46,660,605 & 53,615,616 & 6,249,141 & 21,880,667 & 25,294,075 & 4,785,440 & 18,064,774 & 20,583,735   \\ \cline{2-14} &
         \#RFMs & 505,202 & 2,682,236 & 3,907,037 & 465,352 & 2,619,226 & 3,841,330 & 356,694 & 2,324,105 & 3,520,585 & 228,418 & 1,764,072 & 2,843,951   \\ 
         \hline
         \textbf{Dataset} & \multicolumn{13}{c|}{\textbf{Syn\_40k}}\\ 
         \hline
         \multirow{4}{*}{\shortstack{\\ $\delta$: 0.009 \\ $\alpha$: 40 \\$\beta$: 0.001}} & 
         $\theta$ & 25 & 50 & 75 & 25 & 50 & 75 & 25 & 50 & 75 & 25 & 50 & 75 \\ \cline{2-14} &
         $\gamma$ & \multicolumn{3}{c|}{0.00002} & \multicolumn{3}{c|}{0.0001} & \multicolumn{3}{c|}{0.00018} & \multicolumn{3}{c|}{0.00026} \\ \cline{2-14} &
         Runtime & 84.22 & 220.42 & 267.58 & 68.03 & 182.58 & 215.26 & 67.21 & 183.73 & 202.99 & 71.99 & 176.79 & 201.76 \\ \cline{2-14} &
         \#cand & 79,610,307 & 234,731,470 & 286,142,608 & 5,119,973 & 12,988,886 & 15497995 & 2,913,746 & 9,412,869 & 11,061,562 & 2,299,488 & 7,532,794 & 8,824,773  \\ \cline{2-14} &
         \#RFMs & 205,690 & 951,546 & 1,358,901 & 185,668 & 918,125 & 1,324,131 & 142,103 & 793,848 & 1,181,293 & 93,628 & 597,567 & 921,319   \\ 
         \hline
         \textbf{Dataset} & \multicolumn{13}{c|}{\textbf{Syn\_80k}}\\ 
         \hline
         \multirow{4}{*}{\shortstack{\\ $\delta$: 0.009 \\ $\alpha$: 40 \\$\beta$: 0.001}} & 
         $\theta$ & 25 & 50 & 75 & 25 & 50 & 75 & 25 & 50 & 75 & 25 & 50 & 75 \\ \cline{2-14} &
         $\gamma$ & \multicolumn{3}{c|}{0.00002} & \multicolumn{3}{c|}{0.0001} & \multicolumn{3}{c|}{0.00018} & \multicolumn{3}{c|}{0.00026} \\ \cline{2-14} &
         Runtime & 207.96 & 470.49 & 554.58 & 160.72 & 424.40 & 501.75 & 162.13 & 422.61 & 480.11 & 156.01 & 419.81 & 432.80  \\ \cline{2-14} &
         \#cand & 66,136,885 & 145,742,979 & 172,303,173 & 3,951,868 & 10,599,962 & 12,476,117 & 2,933,490 & 8,100,840 & 9,610,559 & 2,466,995 & 7,081,263 & 8,506,619  \\ \cline{2-14} &
         \#RFMs & 242,672 & 1,029,769 & 1,360,957 & 213,976 & 986,495 & 1,316,334 & 155,915 & 845,592 & 1,162,660 & 94,274 & 614,354 & 895,668 \\ 
         \hline
    \end{tabular}}
\end{table*}

\subsection{Compression ability of maximal algorithm}

To evaluate the compression ability of the MSeqRFM algorithm, several experiments are conducted on the six datasets of Table \ref{table: datasets}. We compare the number of patterns mined by the two algorithms. Moreover,  to guarantee a large number of mining patterns for the practical situation in big data mining, the total number of mining results of \textit{RFMs} in each experiment must be four figures or more. The compression effect is shown in Table \ref{table: MRFMs}. In the experiment, MSeqRFM can greatly compress the \textit{RFMs} for the highest compression rate. For instance, in Syn\_20K, Syn\_40K, and Syn\_80K, the compression rate is up to 98\% or more, and that in e\_shop is also up to around 87\%. Especially in BMS, the compression rate is low, which is 48.39\%. The reason for the different results is that BMS is a single-item-based and sparse dataset consisting of short sequences. In each sequence, each itemset with a single item is ordered by item name, and each item is unique in each sequence. Therefore, if the single-item RFM-pattern does not have a super-sequence RFM-pattern in other sequences, it is added to the set of results. That indicates a low redundancy rate in the mining of BMS, which leads to a low compression rate.

In summary, MSeqRFM achieves the goal of result compression using the maximal checking strategy. In particular, MSeqRFM has good compression ability in the multiple-item-based dataset and can reduce redundancy to a great extent. Even in the single-item-based dataset, MSeqRFM can compress around half of the redundant patterns.

\begin{table}[ht]
    \caption{The number of maximal patterns and all patterns.}
    \label{table: MRFMs}
    \centering
    \small
    \begin{tabular}{|c|ccc|}
        \hline
         \textbf{Dataset} & \#RFMS & \#MRFMs & Compression rate \\ 
         \hline
        \textbf{BMS} & 94,054 & 48,542 & 48.39\%  \\ 
         \textbf{e\_shop} & 4,770  & 627 & 86.86\%  \\ 
         \textbf{online\_retail} & 23,785 & 1,040 & 95.63\% \\ 
         \textbf{Syn\_20k} & 246,546 & 2,051 & 99.17\% \\ 
         \textbf{Syn\_40k} & 193,342 & 2,146 & 98.89\% \\ 
         \textbf{Syn\_80k} & 140,155 & 2,418 & 98.27\%  \\ 
         \hline
    \end{tabular}
\end{table}

\section{Conclusion and Future Work}  \label{sec: conclusion}

In this paper, we propose an advanced method named SeqRFM to increase the efficiency of RFM-pattern mining in sequential data. The fundamental concepts and the computation methods of the three dimensions are redefined, which makes the mining process more rigorous. In addition, three upper bounds on the monetary dimension and several pruning strategies are proposed to reduce unnecessary calculations. Furthermore, we develop an efficient algorithm named MSeqRFM to compress the mining results. Experiments on multiple aspects have demonstrated the superiority of SeqRFM in time and memory, the difference in parameter comparison in different dimensions, and the compression ability of MSeqRFM. There are several potential avenues for future research and extensions. A possibility is to further extend the method by considering additional pruning strategies to improve the efficiency and execution time of the mining process. Furthermore, for a more comprehensive exploration, other dimensions besides recency, frequency, monetary value, and time-span length can be considered in the future. Additionally, further research can focus on larger datasets to evaluate the algorithm's performance and assess its efficiency and applicability in handling big data scenarios.

\bibliographystyle{cas-model2-names}
\bibliography{SeqRFM.bib}

\end{document}